\def\ps@headings{%
\def\@oddhead{\mbox{}\scriptsize\rightmark \hfil \thepage}%
\def\@evenhead{\scriptsize\thepage \hfil \leftmark\mbox{}}%
\def\@oddfoot{}%
\def\@evenfoot{}}
\makeatother \pagestyle{headings}
\newtheorem{proposition}{Proposition}
\begin{document}

\title{Content-Centric Sparse Multicast Beamforming for Cache-Enabled Cloud RAN}

\author{Meixia~Tao, Erkai~Chen, Hao~Zhou and Wei~Yu\\
\thanks{This paper was presented in parts at IEEE/CIC ICCC 2015 \cite{usercentric_ICCC15} and IEEE GLOBECOM 2015 \cite{contentcentric_Globecom15}.
M.~Tao and E.~Chen are with the Department of Electronic Engineering at Shanghai Jiao Tong University, Shanghai, P. R. China (emails: \{mxtao, cek1006\}@sjtu.edu.cn).
H.~Zhou was with the Department of Electronic Engineering at Shanghai Jiao Tong University, Shanghai, P. R. China and is now with Department of Electrical Engineering and Computer Science, Northwestern University, USA (email: haozhou2015@u.northwestern.edu).
W.~Yu is with the Department of Electrical and Computer Engineering, University of Toronto, Canada (email: weiyu@comm.utoronto.ca).}
}

\maketitle

\begin{abstract}
This paper presents a content-centric transmission design in a cloud radio access network (cloud RAN) by incorporating multicasting and caching. Users requesting a same content form a multicast group and are served by a same cluster of base stations (BSs) cooperatively. Each BS has a local cache and it acquires the requested contents either from its local cache or from the central processor (CP) via backhaul links. We investigate the dynamic content-centric BS clustering and multicast beamforming with respect to both channel condition and caching status. We first formulate a mixed-integer nonlinear programming problem of minimizing the weighted sum of backhaul cost and transmit power under the quality-of-service constraint for each multicast group. Theoretical analysis reveals that all the BSs caching a requested content can be included in the BS cluster of this content, regardless of the channel conditions. Then we reformulate an equivalent sparse multicast beamforming (SBF) problem. By adopting smoothed $\ell_0$-norm approximation and other techniques, the SBF problem is transformed into the difference of convex (DC) programs and effectively solved using the convex-concave procedure algorithms. Simulation results demonstrate significant advantage of the proposed content-centric transmission. The effects of heuristic caching strategies are also evaluated.
\end{abstract}

\begin{IEEEkeywords}
Cloud radio access network (Cloud RAN), caching, multicasting, content-centric wireless networks, sparse beamforming.
\end{IEEEkeywords}

\section{Introduction}

Owing to the popularity of smart mobile devices, the mobile data traffic is growing rapidly. Meanwhile, the provision of the types of wireless services is also experiencing a fundamental shift from the traditional connection-centric communications, such as phone calls, e-mails, and web browsing, to the emerging content-centric communications, such as video streaming, push media, mobile applications download/updates, and mobile TV \cite{Cisco2015}. A central feature of these emerging services is that a same copy of content may be needed by multiple mobile users, referred to as \emph{content diversity} \cite{LiuHui-WCM14} or \emph{content reuse} \cite{Femtocaching-mag13}. Two enabling techniques to exploit such content diversity are multicasting and caching. Compared with point-to-point unicast transmission, point-to-multipoint multicast transmission provides an efficient capacity-offloading approach for common content delivery to multiple subscribers on a same resource block\cite{eMBMS, multicast06}. Caching, on the other hand, brings contents closer to users by pre-fetching the contents during the off-peak time and hence can greatly reduce the network congestion and improve the user-perceived experience \cite{Femtocaching-mag13, content_caching, proactive_caching}. In this paper, we propose a content-centric transmission design based on cloud radio access network (cloud RAN) architectures for efficient content delivery by integrating both multicasting and caching.

Cloud RAN is a promising network architecture to boost network capacity and energy efficiency \cite{CRAN_mag}. In a cloud RAN, the base stations (BSs) are all connected to a central processor (CP) via digital backhaul links, thus enabling joint data processing and precoding across multiple BSs \cite{cooperative_jsac}. However, performing full joint processing requires not only signalling overhead but also payload data sharing among all the BSs, resulting in tremendous burden on backhaul links.
To alleviate the backhaul capacity requirement in cloud RAN architecture, one way is to associate each user with a cluster of BSs so that each user is cooperatively served by the given cluster of BSs through joint precoding. With BS clustering, each user's data only needs to be distributed to its serving BSs from CP rather than all the BSs, thus the overall backhaul load can be greatly reduced. Dynamic BS clustering and the associated sparse beamforming have been developed in \cite{ZQLuo_jsac, Tony_trans, WeiYu_globecom, WeiYu_access, smooth_trans, JunZhang_trans}.
These previous works present a user-centric view on the BS clustering and beamforming, regardless of the content diversity. In practice, however, the mobile users normally send requests in a non-uniform manner, following certain content popularity distribution, e.g., the Zipf distribution \cite{zipf_distribution}. Popular contents are likely to be requested by multiple users.

To exploit such content popularity, we propose a content-centric BS clustering and multicast beamforming in this paper. We first group all the scheduled users according to their requested contents. In specific, the users who request a same content form one group. Then we use multicast transmission to deliver each requested content to the corresponding user group. The users from the same multicast group receive a common content sent by a cluster of BSs. The BS clustering is designed with respect to each requested content. The BS clusters for different contents may overlap. Compared with traditional unicast transmission in user-centric BS clustering, multicast transmission in the considered content-centric BS clustering can improve energy and spectral efficiency, thus providing efficient content delivery in wireless networks.

In addition to multicasting, we incorporate caching in the considered cloud RAN architecture to facilitate content-centric communications. Wireless caching has been recently proposed as a promising way of reducing peak traffic and backhaul load, especially for video content delivery. Due to the content reuse feature of video streaming, i.e., many users are likely to request the same video content, caching some of the popular contents at the local BSs during the off-peak time  \cite{Femtocaching_trans} or pushing them at user devices directly through broadcasting \cite{wang-chen-liu14, CoMPcloud_SCVT15} can help improve the network throughput and user-perceived quality of experience.
In the considered cache-enabled cloud RAN, each BS is equipped with a local cache. If the content requested by a user group is already cached in its serving BS, the serving BS will transmit the content directly. Otherwise, the serving BS needs to fetch the content via the backhaul links from the CP. Compared with non-cache cooperative networks, cache-enabled cloud RAN can fundamentally reduce the backhaul cost and support more flexible BS clustering. A similar wireless caching network has been considered in \cite{JunZhang_pimrc}, but it is based on a user-centric transmission design.

In this paper, we investigate the joint design of content-centric BS clustering and multicast beamforming in the considered cache-enabled cloud RAN to improve the network performance as well as to reduce backhaul cost. There are two important issues to address in this paper. The first one is \textbf{how to optimize the content-centric BS clustering}.
In the traditional user-centric BS clustering without cache \cite{WeiYu_access, JunZhang_trans}, each user is most likely to be served by a cluster of BSs which are nearby and have good channel conditions. While in the proposed content-centric BS clustering, there are multiple users receiving a same content. Since the users are geographically separated from each other, the BS clustering becomes more involved. Moreover, after introducing cache at each BS, the BS clustering needs to be adaptive to the caching state as well. Therefore, the BS clustering in the considered network must be both channel-aware and cache-aware.

To address this issue, we formulate an optimization problem with the objective of minimizing the total network cost subject to the quality-of-service (QoS) constraint for each muticast group. The total network cost is modeled by the weighted sum of backhaul cost and transmit power cost, where the backhaul cost is counted by the accumulated data rate transferred from the CP to all the BSs through the backhaul links. This problem is a mixed-integer nonlinear programming (MINLP) problem. Through theoretical analysis, we show that all the BSs which cache the content requested by a multicast group can be included in the BS cluster of this content without loss of optimality, regardless of their channel conditions. This finding can be used to reduce the search space for the global optimal solution of the joint content-centric BS clustering and multicast beamforming problem.

To make the problem more tractable, we reformulate an equivalent sparse multicast beamforming design problem. Solving the equivalent problem is, however, still challenging due to both the nonconvex QoS constraints and the nonconvex discontinuous $\ell_0$-norm in the objective. In sparse signal processing, one approach to handle the $\ell_0$-norm minimization problem is to approximate the $\ell_0$-norm with its weighted $\ell_1$-norm \cite{Boyd_reweighted_l1} and update the weighting factors iteratively. Another approach is to approximate the $\ell_0$-norm with smooth functions \cite{smooth_trans}, where the authors use Gaussian family functions. The smooth function method is a better approximation to the $\ell_0$-norm but its performance highly depends on the smoothness factor. In this paper, we adopt the smoothed $\ell_0$-norm approximation and compare the performance of three smooth functions: logarithmic function, exponential function, and arctangent function. The approximated problem is then transformed into the difference of convex (DC) programming problems. Two specific forms of DC programming are obtained. One has DC objective and convex constraints by using semi-definite relaxation (SDR) approach, and the other is a general one with DC forms in both objective and constraints. The convex-concave procedure (CCP) \cite{yuille2003cccp, smola2005kernel, lipp2014variations} based algorithms are then proposed to find effective solutions of the original problem.
Note that the proposed CCP-based algorithms can also be used to solve the traditional multi-group multicast beamforming problems formulated in \cite{Luo_MultlGroup_multicast}.

The second issue is \textbf{how would different caching strategies affect the overall performance of the cache-enabled cloud RAN}. In order to increase cache hit rate, i.e., the probability that a requested content can be accessed at the local cache of its delivering BSs, caching strategies should be designed carefully. It is noted in \cite{molisch2014caching} that when BSs are sparsely deployed such that each user can only be connected to one BS, each BS should cache the most popular contents, otherwise when BSs are densely deployed such that each user can be served by multiple BSs, the optimal caching strategy is highly complex.
Notice that the cache placement and content delivery phases happen on different timescales: cache placement in general is in a much larger timescale (e.g., on a daily or hourly basis) while content delivery is in a much shorter timescale \cite{Femtocaching_trans, molisch2014caching}.
In \cite{Mixed-timescale}, the authors studied the mixed-timescale precoding and cache control in MIMO interference network. But the caching strategy in each transmitter is the same. In addition, the precoding is limited to two modes only, i.e., each user is either served by all the BSs simultaneously or served by one of the BSs only.

In this work, although the content placement is assumed to be given in the formulated content-centric sparse multicast beamforming problem, we shall briefly address the caching strategy problem through simulation. We consider three heuristic caching strategies, popularity-aware caching, probabilistic caching, and random caching. Numerical results have demonstrated interesting insights into the performance of these caching strategies together with the proposed sparse beamforming algorithms.

The rest of the paper is organized as follows. Section \ref{section:Network Model} introduces the network model and assumptions. Section \ref{section:Problem Formulation and Analysis} provides the formulations of content-centric BS clustering and multicast beamforming problem and the equivalent sparse beamforming problem. The CCP based algorithms are presented in Section \ref{section:CCP based Algorithm Design}. Comprehensive simulation results are provided in Section \ref{section:Simulation Results}. Finally, we conclude the paper in Section \ref{section:Conclusion}.

\emph{Notations}: Boldface lower-case and upper-case letters denote vectors and matrices respectively. Calligraphy letters denotes sets. $\mathbb{R}$ and $\mathbb{C}$ denote the real and complex domains, respectively. $\mathbb{E}(\cdot)$ denotes the expectation of a random variable. $\mathcal{CN}(\mu,\sigma^2)$ represents a complex Gaussian distribution with mean $\mu$ and variance $\sigma^2$. The conjugate transpose and $\ell_p$-norm of a vector are denoted as $(\cdot)^H$ and $\lVert \cdot \rVert_p$ respectively. $\mathbf{1}_M$ and $\mathbf{0}_M$ denote the $M$-long all-ones and $M$-long all-zeros vectors respectively. The inner product between matrices $\mathbf{X}$ and $\mathbf{Y}$ is defined as $\left\langle \mathbf{X}, \mathbf{Y}\right\rangle = \text{Tr}(\mathbf{X}^H\mathbf{Y})$. For a square matrix $\mathbf{S}_{M \times M}$, $\mathbf{S}\succeq \mathbf{0}$ means that $\mathbf{S}$ is positive semi-definite. The real part of a complex number $x$ is denoted by $\mathfrak{R}\{x\}$.

\section{Network Model and Assumptions} \label{section:Network Model}

\subsection{System Model}
Consider the downlink transmission of a cache-enabled cloud RAN with $N$ multiple-antenna BSs and $K$ single-antenna mobile users. Each BS is connected to the CP via a limited-capacity backhaul link. The CP can access a database that contains a total number of $F$ contents with equal size. Let ${\cal N}=\{1, 2, \ldots, N\}$ denote the set of BSs, where each BS is equipped with $L$ transmit antennas and has a local cache with finite storage size. At the beginning of a transmission time interval, each user submits a content request according to certain demand probabilities. Users requesting the same content are grouped together and served using multicast transmission. The transmission time interval is assumed to contain enough number of transmission frames for the system to complete the content delivery. Let the total number of multicast groups be denoted as $M$ ($1\le M \le \min\{K, F\})$ and the set of users in each group $m$ be denoted as $\mathcal{G}_m$, for $m=1, \ldots, M$. We assume that each user can request at most one content at a time, thus we have $\mathcal{G}_i \bigcup \mathcal{G}_j = \emptyset, i \neq j $ and $\sum_{m=1}^M \lvert \mathcal{G}_m \rvert \le K$.

We consider the dynamic content-centric BS clustering and multicast beamforming on a transmission frame basis. The channel remains constant within each transmission frame but varies from one frame to another.
Each multicast group $m$ is served by a cluster of BSs cooperatively during each frame, denoted as $\mathcal{Q}_m$, where $\mathcal{Q}_m \subseteq \mathcal{N}$ and they may overlap with each other. Each BS in a cluster acquires the requested contents either from its local cache or from the database in the CP through the backhaul. During each transmission frame, the BS clusters $\{\mathcal{Q}_m \}_{m=1}^M$ are dynamically optimized by the CP.
An example is shown in Fig.~\ref{fig:systemmodel}, where three mutlicast groups are formed and the instantaneous BS clusters serving the three groups are $\mathcal{Q}_1 = \{ 1,2 \}$, $\mathcal{Q}_2 = \{ 2 \}$, and $\mathcal{Q}_3 = \{ 1,2,3 \}$, respectively.

Define a binary BS clustering matrix $\mathbf{S} \in \{0,1\}^{M \times N} $, where $s_{m,n} = 1$ indicates that the $n$-th BS belongs to the serving cluster for the $m$-th multicast group and 0 otherwise. That is, $s_{m,n} = 1$ if $n \in \mathcal{Q}_m$ and $s_{m,n} = 0$ if $n \notin \mathcal{Q}_m$.
Denote the aggregate network-wide beamforming vector of group $m$ from all BSs as $\mathbf{w}_{m} = [\mathbf{w}_{m,1}^H,\mathbf{w}_{m,2}^H,\cdots, \mathbf{w}_{m,N}^H ]^H \in \mathbb{C}^{NL\times 1}$, where  $\mathbf{w}_{m,n} \in \mathbb{C}^{L\times 1}$ is the beamforming vector for group $m$ from BS $n$.
Note that $\mathbf{w}_{m,n}=\mathbf{0}$ if $n \notin \mathcal{Q}_m$. Therefore, for each group $m$, the network-wide beamformer $\mathbf{w}_{m}$ can have a sparse structure. The received signal at user $k$ from group $\mathcal{G}_m$ can be written as
\begin{equation}
y_{k} = \mathbf{h}_k^H \mathbf{w}_m x_m + \sum_{j \neq m}^M \mathbf{h}_k^H \mathbf{w}_j x_j + n_k,~\forall k \in \mathcal{G}_m
\end{equation}
where $\mathbf{h}_k \in \mathbb{C}^{NL\times 1}$ is the network-wide channel vector from all the BSs to user $k$, $x_m \in \mathbb{C}$ is the data symbol of  the content requested by group $m$ with $\mathbb{E} \left[ \lvert x_m\rvert^2 \right] = 1$, and $n_k \sim \mathcal{CN}(0,\sigma_k^2)$ is the additive white Gaussian noise at user $k$.
The received SINR for user $k\in \mathcal{G}_m$ is expressed as
\begin{equation}
  \text{SINR}_{k} = \frac {\lvert \mathbf{h}_k^H \mathbf{w}_m \rvert^2} {\sum_{j \neq m}^M \lvert \mathbf{h}_k^H \mathbf{w}_j \rvert^2 + \sigma_k^2}, ~\forall k \in \mathcal{G}_m.
\end{equation}

We define the target SINR vector as $\bm{\gamma} = [\gamma_1,\gamma_2,\cdots,\gamma_M]$ with each element $\gamma_m$ being the minimum received SINR required by the users in group $m$. In this paper, we consider the fixed rate transmission as in \cite{WeiYu_globecom}, where the transmission rate for group $m$ is set as $R_m= B \log_2(1+\gamma_m)$, where $B$ is the total available channel bandwidth.  Thus, to successfully decode the message, for any user $k \in \mathcal{G}_m$, its target SINR should satisfy $\text{SINR}_k \geq \gamma_m$.

%and a local cache with finite storage. Cache size of the $n$-th BS is assumed to be $Y_n$.

\subsection{Cache Model}
Let $\mathcal{F}=\{1,2,\cdots,F\}$ represent the database of $F$ contents, each with normalized size of $1$. The local storage size of BS $n$ is denoted as $Y_n$ ($Y_n < F$), which represents the maximum number of contents it can cache.
We define a binary cache placement matrix $\mathbf{C} \in \{0,1\}^{F \times N}$, where $c_{f,n} = 1$ indicates that the $f$-th content is cached in the $n$-th BS and 0 otherwise. Due to limited cache size, $\sum_{f=1}^F c_{f,n} \le Y_n$.
As noted before, cache placement happens in a much larger timescale than scheduling and transmission. Hence we assume that the cache placement matrix $\mathbf{C}$ is given and fixed according to certain caching strategy, and focus on the optimization of content-centric dynamic BS clustering and multicast beamforming at the given caching design. Similar assumptions have been made in the previous literature, e.g. \cite{JunZhang_pimrc}.

\begin{figure}[t]
\begin{centering}
\includegraphics[scale=.35]{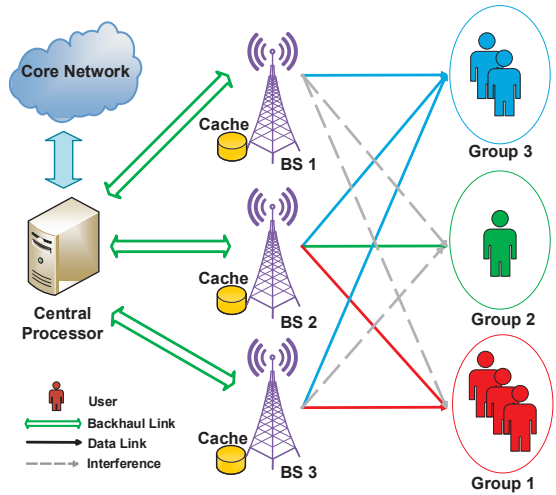}
 \caption{\small{An example of cache-enabled cloud RAN downlink.}}\label{fig:systemmodel}
\end{centering}
\end{figure}

\subsection{Cost Model}
 The total network cost for the considered network architecture consists of both the backhaul cost and the transmission power.  Let $f_m$ denote the content requested by users in group $m$.
For each BS $n \in \mathcal{Q}_m$, if content $f_m$ has been cached in its local storage, it can access the content directly without costing backhaul. Otherwise, it needs to fetch this content from the CP via the backhaul link. Since the data rate of fetching a content from the CP needs to be as large as the content-delivery rate $R_m$, we model the backhaul cost as the transmission rate of the corresponding multicast group. Thus, the total backhaul cost of the network can be expressed as:
\begin{equation}
C_B = \sum_{m=1}^M \sum_{n=1}^N s_{m,n} (1-{c}_{f_m,n}) R_m. \label{eqn:backhaul cost}
\end{equation}

The total transmission power cost at all the BSs is defined as:
\begin{equation}
C_P = \sum_{m=1}^M \sum_{n=1}^N \lVert \mathbf{w}_{m,n} \rVert_2^2. \label{eqn:power cost}
\end{equation}
As a result, the total network cost can be modeled as:
\begin{equation}
C_N = C_B + \eta C_P, \label{eqn:network cost}
\end{equation}
where $\eta > 0$ is a weighting parameter between backhaul cost and transmission power. In practice, $\eta$ can be regarded as the pricing factor to trade power for backhaul capacity.

\section{Problem Formulation and Analysis} \label{section:Problem Formulation and Analysis}

In this section, we formulate an optimization problem of minimizing the total network cost by jointly designing the content-centric BS clustering and multicast beamforming. In the considered network architecture, all the channel state information (CSI) and user requests are assumed to be available at the CP for joint processing. The cache placement is also given as assumed in the previous section.

\subsection{Joint content-centric BS clustering and multicast beamforming} \label{sec:P0}
The goal is to optimize the BS clustering and multicast beamforming at each transmission frame so as to  minimize the total network cost. This is formulated as:
\begin{subequations}
\begin{align}
\mathcal{P}_0:~\mathop{\text{minimize}}_{\{\mathbf{w}_{m,n}\}, \{s_{m,n}\}} \quad &\sum_{m=1}^M \sum_{n=1}^N s_{m,n} (1-{c}_{f_m,n}) R_m + \eta \sum_{m=1}^M \sum_{n=1}^N \lVert \mathbf{w}_{m,n} \rVert_2^2 \label{eqn:network-cost-obj}\\
\text{subject to} \quad &\text{SINR}_{k} \geq \gamma_m,~\forall k \in \mathcal{G}_m, \forall m \label{eqn:SINR-constraint} \\
& s_{m,n} \in \{0,1\},~\forall m,n \label{eqn:binary-constraint} \\
& (1 - s_{m,n}) \mathbf{w}_{m,n} = \mathbf{0},~\forall m,n \label{eqn:relationship-constraint}
\end{align} \label{eqn:network-cost-minimization-problem}
\end{subequations}
Constraint \eqref{eqn:SINR-constraint} is the minimum required SINR constraint, and constraint \eqref{eqn:relationship-constraint} indicates that if BS $n \notin \mathcal{Q}_m$, i.e., $s_{m,n}=0$, the corresponding beamformer $\mathbf{w}_{m,n}$ must be zero. Note that per-BS or per-antenna peak power constraints may be imposed in practice as in \cite{peak-power-14}. However, such peak power constraints are convex and hence will not change the nature of the formulated problem in this work as well as the algorithm design. As such we have omitted the peak power constraints and focus on the study of backhaul-power tradeoff in the total network cost.

 Problem $\mathcal{P}_0$ is combinatorial in nature. A brute-force approach to find the global optimum BS clusters is exhaustive search. In particular, there are $2^{MN}$ possible BS clustering matrices $\{ \mathbf{S}\}$. For each given BS clustering matrix $\mathbf{S}$, the backhaul cost $C_B$ becomes a constant and problem $\mathcal{P}_0$ reduces to the following power minimization problem with partially coordinated transmission:
\begin{subequations}
\begin{align}
\mathcal{P}(\mathcal{Z}_{\mathbf{S}}):~\mathop{\text{minimize}}_{ \{ \mathbf{w}_{m,n}\}} \quad & \sum_{m=1}^M \sum_{n=1}^N \lVert \mathbf{w}_{m,n} \rVert_2^2 \\
\text{subject to} \quad  & \eqref{eqn:SINR-constraint} \nonumber \\
& \mathbf{w}_{m,n} = \mathbf{0},~\forall (m,n) \in \mathcal{Z}_{\mathbf{S}}. \label{eqn:clustering-constraint}
 \end{align} \label{eqn:beamforming-problem}
\end{subequations}
where $\mathcal{Z}_{\mathbf{S}} = \{(m,n) | s_{m,n}=0 \}$ is the set of inactive BS-content associations.

Problem $\mathcal{P}(\mathcal{Z}_{\mathbf{S}})$ is similar to the QoS multi-group multicast beamforming problems \cite{Luo_MultlGroup_multicast, ZhengZheng_mulitcell_multicast} and is a nonconvex quadratically constrained quadratic programming (QCQP) problem. Unlike traditional unicast beamforming problem where the nonconvex SINR constraints can be transformed to a second-order cone programming (SOCP) problem and hence solved efficiently, the multicast beamforming problem is NP-hard in general. The authors in \cite{Luo_MultlGroup_multicast} developed semi-definite relaxation (SDR) method with randomization to obtain a good sub-optimal solution.

Once $\mathcal{P}(\mathcal{Z}_{\mathbf{S}})$ is solved for all possible BS clustering matrices $\mathbf{S}$'s, the one with the minimum objective is then selected to be the global optimal solution. Note that problem $\mathcal{P}_0$ can be infeasible if the SINR requirements $\{\gamma_m \}$ are too stringent or the channels of users in different multicast groups are highly correlated. In general, determining the feasibility of an NP-hard problem is as difficult as solving the problem itself.  In \cite{ZhengZheng_mulitcell_multicast}, a necessary condition for the QoS-based multi-cell multicast beamforming problem to be feasible is given. In this work we only discuss $\mathcal{P}_0$  when it is feasible.

The following proposition reveals some insights on BS clustering in cached-enabled networks.

\begin{proposition}
If the content $f_m$ requested by a multicast group $m$ has been cached in BS $n$, i.e., ${c}_{f_m,n} = 1$, then without loss of optimality, one can set $s_{m,n} = 1$ in problem $\mathcal{P}_0$.   \label{Proposition-P1}
\end{proposition}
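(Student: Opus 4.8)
The plan is to show that, starting from any feasible solution of $\mathcal{P}_0$ in which $c_{f_m,n}=1$ but $s_{m,n}=0$, one can flip $s_{m,n}$ to $1$ without increasing the objective and without violating any constraint. Observe first that when $c_{f_m,n}=1$ the term $s_{m,n}(1-c_{f_m,n})R_m$ in the backhaul cost \eqref{eqn:network-cost-obj} is identically zero regardless of the value of $s_{m,n}$, so the backhaul cost $C_B$ is completely insensitive to this particular entry of $\mathbf{S}$. Hence changing $s_{m,n}$ from $0$ to $1$ leaves the objective \eqref{eqn:network-cost-obj} unchanged, provided we do not alter any beamformer. The only subtlety is the coupling constraint \eqref{eqn:relationship-constraint}: with $s_{m,n}=0$ it forces $\mathbf{w}_{m,n}=\mathbf{0}$, whereas with $s_{m,n}=1$ it imposes nothing.

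The key steps, in order, are: (i) take an optimal (or any feasible) point $\big(\{\mathbf{w}_{m,n}^\star\},\{s_{m,n}^\star\}\big)$ of $\mathcal{P}_0$; (ii) if $s_{m,n}^\star=1$ already, there is nothing to prove, so assume $s_{m,n}^\star=0$, which by \eqref{eqn:relationship-constraint} gives $\mathbf{w}_{m,n}^\star=\mathbf{0}$; (iii) construct a new point identical to the old one except that $s_{m,n}$ is set to $1$, keeping all beamformers (including the zero vector $\mathbf{w}_{m,n}=\mathbf{0}$) unchanged; (iv) verify feasibility of the new point — the SINR constraints \eqref{eqn:SINR-constraint} depend only on the beamformers, which are untouched, so they still hold; the binary constraint \eqref{eqn:binary-constraint} clearly holds; and \eqref{eqn:relationship-constraint} holds trivially since $(1-s_{m,n})\mathbf{w}_{m,n}=0\cdot\mathbf{0}=\mathbf{0}$; (v) evaluate the objective of the new point, which equals that of the old point because the flipped entry is multiplied by $(1-c_{f_m,n})=0$ and no beamformer changed. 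Therefore the new point is feasible with the same (hence optimal, if we started from an optimum) objective value, which establishes that $s_{m,n}=1$ can be assumed without loss of optimality.

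I do not anticipate a genuine obstacle here: the statement is essentially a bookkeeping observation about which constraints and cost terms actually depend on $s_{m,n}$ when $c_{f_m,n}=1$. The one point that deserves a careful word is that enlarging the BS cluster never hurts feasibility — adding a BS to $\mathcal{Q}_m$ only relaxes the structural constraint \eqref{eqn:relationship-constraint} and does not tighten anything, since we are free to keep transmitting nothing from that BS. (If one wanted a strictly stronger claim, namely that flipping the bit can strictly help, one would argue that the extra degree of freedom $\mathbf{w}_{m,n}\neq\mathbf{0}$ can only lower the power needed to meet the SINR targets; but for the stated "without loss of optimality" claim, the equality argument above suffices.) It is worth emphasizing in the write-up that this holds \emph{regardless of the channel conditions}, precisely because the argument never invokes $\mathbf{h}_k$ beyond noting that the SINR expressions are left invariant.
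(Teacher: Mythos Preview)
Your proof is correct and follows essentially the same logic as the paper's: both observe that the backhaul term $s_{m,n}(1-c_{f_m,n})R_m$ vanishes whenever $c_{f_m,n}=1$, and that setting $s_{m,n}=1$ only relaxes the structural constraint \eqref{eqn:relationship-constraint} on $\mathbf{w}_{m,n}$. The paper phrases the second step via the power-minimization subproblem $\mathcal{P}(\mathcal{Z}_{\mathbf{S}})$ and feasible-set inclusion ($\mathcal{Z}_{\mathbf{S}''}=\mathcal{Z}_{\mathbf{S}'}\setminus(m,n)$ implies $C''_P\le C'_P$), whereas you directly construct a new feasible point by keeping all beamformers fixed; this is a cosmetic difference, and your parenthetical remark already captures the paper's slightly stronger inequality.
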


\begin{proof} See Appendix A.
\end{proof}

Proposition \ref{Proposition-P1} indicates that if a certain BS caches the requested content already, then adding this BS to the existing cluster of this content regardless of its channel conditions will not cause extra backhaul cost but can potentially reduce the total transmit power because of higher degrees of freedom for cooperative transmission. This proposition however does not mean BS $n$ must serve group $m$ with strictly positive power if $c_{f_m,n}=1$.
Depending on the actual channel realizations, it is possible that the optimized beamformer $\mathbf{w}_{m,n}=\mathbf{0}$ even when $s_{m,n}=1$ according to the problem formulation in  $\mathcal{P}_0$.
Proposition \ref{Proposition-P1} can be used to reduce the exhaustive search space for global optimal BS clustering. In the extreme case when all the requested contents are cached in every BS, then the original joint BS clustering and multicast beamforming problem $\mathcal{P}_0$ reduces to the multicast beamforming problem with full cooperation, where $s_{m,n}=1, \forall m, n$.
In the general case when there exists a requested content which is not cached anywhere (except the CP), the search space for global optimal BS clustering is in the order of $2^{N}$. When the number of BSs, $N$, is very large in the considered cloud RAN architecture, the complexity can still be prohibitively high.

The above proposition holds for user-centric design as well \cite{usercentric_ICCC15}. Based on Proposition \ref{Proposition-P1}, we have developed a cache-aware greedy BS clustering algorithm, which is an extension of the greedy algorithm in \cite{usercentric_ICCC15} from unicast transmission to multicast transmission. The details are omitted due to page limit. The algorithm starts with full BS cooperation, then successively deactivates one BS from the serving cluster of a requested content based on greedy search. The BSs to be deactivated for each content only comes from the set of BSs which do not cache the content. Similar greedy algorithms without cache for user-centric BS clustering are proposed in \cite{JunZhang_trans}. Nevertheless, the number of iterations in such greedy algorithms in the worst case still grows quadratically with $MN$ and in each iteration it needs to solve a non-convex QCQP problem.

\subsection{Sparse multicast beamforming}

In this subsection, we formulate a sparse multicast beamforming (SBF) problem which is equivalent to the original problem $\mathcal{P}_0$ but more tractable. It is clear that the BS cluster matrix $\mathbf{S}$ can be specified with the knowledge of the beamformers $\mathbf{w}_{m,n}$'s. Specifically, when $\mathbf{w}_{m,n} = \mathbf{0}$,  we have:
\begin{equation}
 s_{m,n} =
 \begin{cases}
 0, &\text{if}~c_{f_m,n}=0,  \\
 0~\text{or}~1, &\text{if}~c_{f_m,n}=1. \\
 \end{cases}
\end{equation}
Otherwise when $\mathbf{w}_{m,n} \ne \mathbf{0}$, we have $s_{m,n} = 1$ from constraint \eqref{eqn:relationship-constraint}. Thus, without loss of optimality, $s_{m,n}$ can be replaced by the $\ell_0$-norm\footnote{The $\ell_0$-norm denotes the number of nonzero elements of a vector. It reduces to the indicator function in the scalar case.} of $\lVert \mathbf{w}_{m,n} \rVert_2^2$:
\begin{equation}
 s_{m,n} =  \big \| \lVert \mathbf{w}_{m,n} \rVert_2^2 \big \|_0. \label{eqn:l0-norm}
\end{equation}
By substituting \eqref{eqn:l0-norm} into the network cost in the objective \eqref{eqn:network-cost-obj},  $\mathcal{P}_0$ can be transformed into the following equivalent problem:
\begin{subequations}
\begin{align}
\mathcal{P}_{\text{SBF}}:~\mathop{\text{minimize}}_{ \{ \mathbf{w}_{m,n} \}} \quad &\sum_{m=1}^M \sum_{n=1}^N \big \| \lVert \mathbf{w}_{m,n} \rVert_2^2 \big \|_0 (1-{c}_{f_m,n}) R_m + \eta \sum_{m=1}^M \sum_{n=1}^N \lVert \mathbf{w}_{m,n} \rVert_2^2  \label{eqn:P_SBF} \\
\text{subject to} \quad  & \eqref{eqn:SINR-constraint}. \nonumber
\end{align}
\end{subequations}

Problem $\mathcal{P}_{\text{SBF}}$ is a sparse multicast beamforming problem with sparsity from the $\ell_0$-norm in the objective. This problem takes the dynamic content-centric BS clustering into account inexplicitly. Through solving this problem, a sparse beamformer for each content can be found whose nonzero entries correspond to the active serving BSs. Solving the equivalent problem $\mathcal{P}_{\text{SBF}}$ is still challenging due to the nonconvex QoS constraints and the nonconvex discontinuous $\ell_0$-norm in the objective.

\section{CCP based Sparse Multicast Beamformer Design} \label{section:CCP based Algorithm Design}
In this section, we first review some basics of the convex-concave procedure (CCP), a powerful heuristic method to find local optimal solutions to the general form of DC programming problems. After that, we show that problem $\mathcal{P}_{\text{SBF}}$ can be converted to DC programs after replacing $\ell_0$-norm with concave smooth functions and applying other techniques. Then two CCP-based algorithms are proposed to find effective solutions of $\mathcal{P}_{\text{SBF}}$.
\subsection{Basics of DC Programming and CCP} \label{subsection:Basics DC CCP}
DC programming deals with the optimization problems of functions with each represented as a difference of two convex functions. A general form of DC programming problems is written as follows:
\begin{subequations}
\begin{align}
\mathop{\text{minimize}}_{ x \in \mathbb{R}^n } \quad & g_0(x) - h_0(x)   \nonumber \\	
\text{subject to} \quad &g_i(x) - h_i(x) \leq 0, ~i=1,2,\dots,m. \nonumber
\end{align}
\end{subequations}
where $g_i$ and $h_i$ for $i=0,1,\dots,m,$ are all convex functions. A DC program is not convex unless the functions $h_i$ are affine, and is difficult to solve in general.

The convex-concave procedure is a heuristic algorithm to find a local optimal solution of DC programs. Its main idea is to convexify the problem by replacing the concave part in the DC functions, which is $h_i, i = 0,1,\dots,m$, by their first order Taylor expansions, then solve a sequence of convex problems successively. Specifically,  it starts with an initial feasible point $x_0$, i.e., $g_i(x_0) - h_i(x_0) \le 0$, for $i=1,\ldots, m$. In each iteration $t$, it solves the following convex subproblem:
\begin{subequations}
\begin{align}
\mathop{\text{minimize}}_{ x \in \mathbb{R}^n} \quad & g_0(x) - \nabla h_0(x^{(t)})^T x  \nonumber \\
\text{subject to} \quad &g_i(x) - \left [h_i(x^{(t)}) + \nabla h_i(x^{(t)})^T (x-x^{(t)}) \right ]\leq 0, ~i=1,2,\dots,m, \nonumber
\end{align}
\end{subequations}
where $x^{(t)}$ is the optimal solution obtained from the previous iteration.

The original CCP is proposed in \cite{yuille2003cccp} dealing with unconstrained or linearly constrained problems. It is then extended in \cite{smola2005kernel} to handle the general form of DC programming with DC constraints. Some other variations and extensions have been made in \cite{lipp2014variations} recently. In particular, it is shown explicitly in \cite{lipp2014variations} that the optimal solution of each iteration $x^{(t)}$ is always a feasible point of the original DC program. The convergence proof of CCP to critical points of the original problem for the differentiable case can be found in \cite{lanckriet2009convergence} and \cite{lipp2014variations}.

We would like to point out that CCP falls in the category of majorization-minimization (MM) algorithms for a particular choice of the majorization function. On the other hand, CCP can also be derived from the DC algorithm (DCA), a primal-dual subdifferential method for solving DC programs where the objective can be a difference of proper lower semi-continuous convex functions. In this paper, for differential convex functions,  we prefer CCP formulation as it is a purely primal description of the problem. In \cite{multicast-sla-2014}, a successive linear approximation (SLA) has been proposed to solve the single-group multicast beamforming problem, which can be seen as a special case of CCP. In \cite{rank-two-cccp}, the authors propose a convex inner approximation technique to tackle the max-min fairness beamforming problem in multicast relay networks, which also belongs to the class of CCP.
To the best of our knowledge, this work is the first to apply CCP to solve multi-group multicast beamforming problems.

\subsection{Smoothed $\ell_0$-norm Approximation}
To solve the sparse multicast beamforming problem $P_{\text{SBF}}$, we approximate the discontinuous $\ell_0$-norm in the objective with a continuous smooth function, denoted as $f(x)$. Specifically, we consider three frequently used smooth concave functions: logarithmic function, exponential function, and arctangent function \cite{rinaldi2010concave}, defined as
\begin{equation} \label{eqn:smooth-funtion}
f_{\theta}(x) =
\begin{cases}
\frac {\log \left (  \frac {x} {\theta} + 1 \right )} {\log(\frac{1}{\theta} + 1)}, &\text{for log-function} \\
1-\exp(-\frac{x } {\theta} ), & \text{for exp-function} \\
\arctan \left ( { \frac {x } {\theta} } \right ), &\text{for arctan-function}
\end{cases}
\end{equation}
where $\theta > 0$ is a parameter controling the smoothness of approximation. A larger $\theta$ leads to smoother function but worse approximation and vice versa. The effectiveness of these smooth functions has been demonstrated for sparse signal recovery \cite{rinaldi2010concave} and feature selection in SVM (Support Vector Machine) \cite{le2015dc}.

With the above smoothed $\ell_0$-norm, the problem $P_{\rm SBF}$ can be approximated as:
\begin{subequations} \label{problem_appx}
\begin{align}
\mathcal{P}_{1}:~\mathop{\text{minimize}}_{ \{ \mathbf{w}_{m,n} \}} \quad &\sum_{m=1}^M \sum_{n=1}^N \alpha_{m,n} f_{\theta} \left(  \lVert \mathbf{w}_{m,n} \rVert_2^2 \right ) + \eta \sum_{m=1}^M \sum_{n=1}^N\lVert \mathbf{w}_{m,n} \rVert_2^2 \\
\text{subject to} \quad &\eqref{eqn:SINR-constraint}. \nonumber
\end{align}
\end{subequations}
where $\alpha_{m,n} \triangleq (1-c_{f_m,n}) R_m$, $\forall m, n$. Note that the smooth function $f_{\theta} \left(  \lVert \mathbf{w}_{m,n} \rVert_2^2 \right )$ is concave in $\lVert \mathbf{w}_{m,n} \rVert_2^2$, but not concave in $\mathbf{w}_{m,n} $. In the following two subsections, we introduce two different techniques to convert problem $\mathcal{P}_{1}$ into DC programming problems, and then solve it using CCP-based algorithms.

\subsection{SDR-based CCP Algorithm} \label{subsection:SDR-based CCP Algorithm}
To convert $\mathcal{P}_{1}$ into a DC program, we take the SDR approach in this subsection. Define two sets of matrices $ \{\mathbf{W}_m \in \mathbb{C}^{NL\times NL} \}_{m=1}^M $ and $\{\mathbf{H}_k \in \mathbb{C}^{NL\times NL}\}_{k=1}^K $ as
\begin{equation}
\mathbf{W}_m = \mathbf{w}_{m}\mathbf{w}_{m}^H,~\forall m \quad \text{and} \quad \mathbf{H}_k = \mathbf{h}_{k} \mathbf{h}_{k}^H,~\forall k.
\end{equation}
We further define a set of selection matrices $\{\mathbf{J}_n\}_{n=1}^N$, where each $\mathbf{J}_n \in \{0,1\}^{NL\times NL}$ is a diagonal matrix defined as
\begin{equation}
\mathbf{J}_n = \text{diag} \left ( \left[ \mathbf{0}_{(n-1)L}^H, \mathbf{1}_{L}^H, \mathbf{0}_{(N-n)L}^H \right ] \right ),~\forall n.
\end{equation}
Therefore, we can rewrite $\lVert \mathbf{w}_{m,n} \rVert_2^2$ as
\begin{equation} \label{eqn:normtr}
\lVert \mathbf{w}_{m,n} \rVert^2_2 = \text{Tr} ( \mathbf{W}_{m} \mathbf{J}_n ),~\forall m,n.
\end{equation}

By removing the rank constraint ${\rm rank}\{\mathbf{W}_m\}=1$, problem $\mathcal{P}_{1}$ can be relaxed as
\begin{subequations}
\begin{align}
\mathcal{P}_{2}:~\mathop{\text{minimize}}_{ \{ \mathbf{W}_m \}} \quad &\sum_{m=1}^M\sum_{n=1}^N\alpha_{m,n}f_{\theta}(\text{Tr} \left ( \mathbf{W}_m\mathbf{J}_n\right ))   + \eta \sum_{m=1}^M \text{Tr} \left ( \mathbf{W}_{m} \right ) \label{eqn:Psdr-obj}\\
\text{subject to} \quad &\frac{ \text{Tr} ( \mathbf{W}_{m} \mathbf{H}_{k} ) } {\sum_{j \neq m}^M \text{Tr} ( \mathbf{W}_{j} \mathbf{H}_{k} ) + \sigma_k^2}  \geq \gamma_{m},~\forall k \in \mathcal{G}_m,~\forall m \label{eqn:SDR-QoS-constraint} \\
&\mathbf{W}_{m} \succeq \mathbf{0},~\forall m \label{eqn:semidefinite-constraint}
\end{align}
\end{subequations}

Clearly the SINR constraint \eqref{eqn:SDR-QoS-constraint} becomes affine. In addition, observing \eqref{eqn:Psdr-obj} closely, we find that the objective can be rewritten as the difference of two functions $g$ and $h$, defined as
\begin{equation}
g - h = \eta \sum_{m=1}^M \text{Tr} \left( \mathbf{W}_{m} \right)
- \left[-\sum_{m=1}^M\sum_{n=1}^N\alpha_{m,n}f_{\theta}(\text{Tr} \left ( \mathbf{W}_m\mathbf{J}_n\right ))\right].
\end{equation}
Recall that the smooth function $f_{\theta}(x)$ \eqref{eqn:smooth-funtion} is chosen to be concave, thus, $h$ is a convex function of $\{\mathbf{W}_m\}$. On the other hand, $g$ is affine. Then the objective of problem $\mathcal{P}_{2}$ can be expressed as a difference of convex functions. Therefore, $\mathcal{P}_{2}$ is a DC program with DC objective and convex constraints. The CCP reviewed in Section \ref{subsection:Basics DC CCP} can be readily applied with the objective function to be convexified only. The subproblem in each iteration is an SDP problem and can be solved using a generic SDP solver. The details are omitted.

If the resulting solution $\{\mathbf{W}_m\}$ after solving problem $\mathcal{P}_{2}$ is already rank-one, the optimal network-wide beamformer $\mathbf{w}^*_{m}$ of problem $\mathcal{P}_{\text{SBF}}$ can be obtained by applying eigen-value decomposition (EVD). Otherwise, the randomization and scaling method \cite{Luo_MultlGroup_multicast, ZhengZheng_mulitcell_multicast} is used to generate a suboptimal solution. In general, the SDR method can demonstrate good performance for small number of users, where the percentage of rank-one solution is high. However, as the number of users or antennas becomes very large, the probability of rank-one solution is very small, and the randomization-based solution can be far from optimal \cite{Luo_MultlGroup_multicast}. Furthermore, by adopting the SDR method, the number of variables is roughly squared  (from $MNL$ to $M(NL)^2$), which is not computationally efficient.

\subsection{Generalized CCP algorithm} \label{subsection:Generalized CCP algorithm}

In this subsection we convert $\mathcal{P}_1$ into a general form of DC programs with DC forms in both objective and constraints without any relaxation.

The nonconvex SINR constraints \eqref{eqn:SINR-constraint} in $\mathcal{P}_1$ can be rewritten as
\begin{equation}
\gamma_m \left (\sum_{j \neq m}^M \lvert \mathbf{h}_k^H \mathbf{w}_j \rvert^2 + \sigma_k^2 \right ) - \lvert \mathbf{h}_k^H \mathbf{w}_m \rvert^2 \leq 0, ~\forall k \in \mathcal{G}_m. \label{eqn:DC-SINR-constraint}
\end{equation}
Clearly, the left hand side of \eqref{eqn:DC-SINR-constraint} is a DC function. By introducing auxiliary variables $\{ t_{m,n} \in \mathbb{R} \}_{n=1,\dots, N}^{m=1,\dots, M}$ and noticing that the smooth function $f_{\theta}(x)$ is strictly monotone increasing, problem $\mathcal{P}_{1}$ can be transformed into the following problem as
\begin{subequations}
\begin{align}
\mathcal{P}_{3}:~\mathop{\text{minimize}}_{ \{ \mathbf{w}_{m,n} \}, \{ t_{m,n} \}} \quad & \sum_{m=1}^M \sum_{n=1}^N \alpha_{m,n} f_{\theta} \left(  t_{m,n} \right ) + \eta \sum_{m=1}^M \sum_{n=1}^N t_{m,n} \label{eqn:P3-obj}\\
\text{subject to} \quad &\lVert \mathbf{w}_{m,n} \rVert_2^2 - t_{m,n} \leq 0,~\forall m,n, \label{eqn:equality-constraint}\\
& \gamma_m \left (\sum_{j \neq m}^M \lvert \mathbf{h}_k^H \mathbf{w}_j \rvert^2 + \sigma_k^2 \right ) - \lvert \mathbf{h}_k^H \mathbf{w}_m \rvert^2 \leq 0, ~\forall k \in \mathcal{G}_m. \label{eqn:DC-SINR-constraint2}
\end{align}
\end{subequations}
Here, the introduction of auxiliary variables $\{ t_{m,n} \}$ is crucial. The objective \eqref{eqn:P3-obj} now becomes the difference of two convex functions expressed as:
\begin{equation}
g-h = \eta \sum_{m=1}^M \sum_{n=1}^N t_{m,n} - \left (- \sum_{m=1}^M \sum_{n=1}^N \alpha_{m,n} f_{\theta} \left(  t_{m,n} \right ) \right ).
\end{equation}
The new constraints \eqref{eqn:equality-constraint} are convex.
It can be observed that problem $\mathcal{P}_{3}$ falls into a general form of DC programming problems since both the objective and the constraints are DC functions. Hence, the general CCP algorithm reviewed in Section \ref{subsection:Basics DC CCP} can be readily applied to obtain a local optimal solution of $\mathcal{P}_{3}$.
In specific, the subproblem in the $i$th iteration of the CCP takes the following form:
\begin{subequations}
\begin{align}
\mathop{\text{minimize}}_{ \{ \mathbf{w}_{m,n} \}, \{ t_{m,n} \}} \quad & \sum_{m=1}^M \sum_{n=1}^N  \left ( \eta  + \alpha_{m,n} \nabla f_{\theta} \left(  t_{m,n}^{(i)}  \right ) \right ) t_{m,n}    \\	
\text{subject to} \quad &\lVert \mathbf{w}_{m,n} \rVert_2^2 - t_{m,n} \leq 0,~\forall m,n \\
& \gamma_m \left (\sum_{j \neq m}^M \lvert \mathbf{h}_k^H \mathbf{w}_j \rvert^2 + \sigma_k^2 \right) - \left( 2 \mathfrak{R} \left \{ (\mathbf{w}_m^{(i)} )^H \mathbf{h}_k \mathbf{h}_k^H \mathbf{w}_m \right \}- \lvert \mathbf{h}_k^H \mathbf{w}_m^{(i)}  \rvert^2 \right) \leq 0, ~\forall k \in \mathcal{G}_m
\end{align}
\end{subequations}
which is a convex QCQP problem.

\textbf{Remark 1}: By comparing with the DC transformation from $\mathcal{P}_{1}$ to $\mathcal{P}_{2}$, the DC transformation from $\mathcal{P}_{1}$ to $\mathcal{P}_{3}$ differs in two major aspects. First, the optimal solution of $\mathcal{P}_{3}$ must satisfy $\lVert \mathbf{w}^*_{m,n} \rVert_2^2 = t_{m,n}^*$ (this can be easily proved by contradiction) and thus the transformation from $\mathcal{P}_{1}$ to $\mathcal{P}_{3}$ does not incur any loss of optimality. On the other hand, the transformation from $\mathcal{P}_{1}$ to $\mathcal{P}_{2}$ is a relaxed one due to the removal of rank-one constraints. Second, while the number of variables in $\mathcal{P}_{2}$ (i.e., $M(NL)^2$) is roughly squared of that in $\mathcal{P}_{1}$ (i.e., $MNL$), the number of variables in $\mathcal{P}_{3}$ (i.e., $MN(L+1)$) almost keeps the same as that in $\mathcal{P}_1$. In the simulation section, we shall compare the performance and complexity of the two methods in greater details.

\subsection{Discussions and Algorithm Outlines} \label{sec-discussion-algorithm}
In this subsection we first provide some discussions on the initialization of the above CCP-based sparse beamforming algorithms and the updating rule of the smoothness parameter $\theta$ in \eqref{eqn:smooth-funtion}. Then we summarize our proposed algorithms formally.

\subsubsection{Initialization}

As stated in Section \ref{subsection:Basics DC CCP}, the CCP algorithm needs a feasible starting point. Unlike the single-group multicast beamforming problems \cite{multicast06, multicast-sla-2014, rank-two-cccp}, where any starting point after simple scaling can be feasible, the starting point for our problem needs to be chosen carefully. In this paper, we propose to find a feasible starting point through solving the following power minimization problem  with full BS cooperation:
\begin{subequations}
\begin{align}
\mathcal{P}_{\text{INI}}:~\mathop{\text{minimize}}_{ \{ \mathbf{W}_m \}} \quad & \sum_{m=1}^M \text{Tr} \left ( \mathbf{W}_{m} \right )\\
\text{subject to} \quad &\frac{ \text{Tr} ( \mathbf{W}_{m} \mathbf{H}_{k} ) } {\sum_{j \neq m}^M \text{Tr} ( \mathbf{W}_{j} \mathbf{H}_{k} ) + \sigma_k^2}  \geq \gamma_{m},~\forall k \in \mathcal{G}_m,~\forall m  \\
&\mathbf{W}_{m} \succeq \mathbf{0},~\forall m
\end{align}
\end{subequations}

The optimal solution $\{\mathbf{W}_m\}$ of $\mathcal{P}_{\text{INI}}$ can be used directly as a feasible starting point for the SDR-based CCP algorithm in Section \ref{subsection:SDR-based CCP Algorithm}. For the generalized CCP algorithm in Section \ref{subsection:Generalized CCP algorithm}, if $\{\mathbf{W}_m\}$ are all rank-one, then the feasible beamformers $\{\mathbf{w}_{m} \}$ can be obtained by applying EVD on $\{\mathbf{W}_m\}$. Otherwise, randomization and scaling are needed. Note that if the SDP problem $\mathcal{P}_{\text{INI}}$ turns out to be infeasible, then the original problem $\mathcal{P}_{0}$ is infeasible and both algorithms will terminate\footnote{However, even if $\mathcal{P}_{\text{INI}}$ is feasible, it does not necessarily guarantee that $\mathcal{P}_0$ is feasible.}.

The need for an initial feasible point can be removed with a penalty CCP algorithm proposed in \cite{lipp2014variations}. But the penalty algorithm is not a descent algorithm and the convergence may not be a feasible point of the original problem. The algorithm needs to be performed many times, each with a different starting point until a feasible point is obtained. This increases the overall complexity significantly.

\subsubsection{Updating Rule of $\theta$}
The performance of the smoothed $\ell_0$-norm approximation $f_{\theta}(x)$ depends on the smoothness factor $\theta$. Intuitively, when $x$ is large, $\theta$ should be large so that the approximation algorithm can explore the entire parameter space; when $x$ is small, $\theta$ should be small so that $f_{\theta}(x)$ has behavior close to $\ell_0$-norm. In our conference paper \cite{contentcentric_Globecom15}, we proposed a novel $\theta$ updating rule that achieves the above effect automatically using a sequence of $\theta$ that depends on the specific $x$ in each iteration. More specifically, $\theta$ is set to be the one that maximizes the gradient of the approximation function. It is shown in \cite{contentcentric_Globecom15} that with such updating rule, the three smooth functions in \eqref{eqn:smooth-funtion} perform almost identically.
In this work, we propose to implement an annealing design of $\theta$. We begin with a large value of $\theta$, solve $\mathcal{P}_{2}$ using SDR-CCP algorithm (or $\mathcal{P}_{3}$ using generalized CCP algorithm), then decrease $\theta$ by a given factor $\beta$ (i.e., $\theta \gets \beta \theta$) and solve $\mathcal{P}_{2}$ (or $\mathcal{P}_{3}$) again with the initial point given by the solution from the previous iteration. This scheme is then iterated until $\theta$ is sufficiently small.

\subsubsection{Algorithm Outlines}
In summary, the two proposed algorithms are outlined in Alg.~\ref{alg:SDR-CCP} and Alg.~\ref{alg:G-CCP}, respectively. Besides the differences mentioned in Remark 1, from Alg.~\ref{alg:SDR-CCP} and Alg.~\ref{alg:G-CCP} one can note that the SDP-CCP algorithm and the G-CCP algorithm also differ in randomization and scaling. In specific, the randomization and scaling in SDR-CCP is performed in the last step of Alg.~\ref{alg:SDR-CCP} and needs to be done many times in order to get a good approximate solution if $\{\mathbf{W}_m^{*}\}$ do not have rank one; however, the randomization and scaling in G-CCP is performed at the initialization step of Alg.~\ref{alg:G-CCP} and can be finished as soon as a feasible point $\{\mathbf{w}_m^{(0)}\}$ is found.

\textbf{Remark 2}: In the extreme case when $\eta\to+\infty$, the original problem $\mathcal{P}_0$ \eqref{eqn:network-cost-minimization-problem} or $\mathcal{P}_{SBF}$ \eqref{eqn:P_SBF} reduces to the total power minimization problem of multi-group multicast beamforming in a fully cooperative network subject to QoS constraints. In this case, the SDR-CCP algorithm (Alg.~1) reduces to the traditional SDR method \cite{Luo_MultlGroup_multicast} without invoking CCP. However, the G-CCP algorithm (Alg.~2) still stands as it is except there is no need to update the smoothness factor $\theta$. Thus we can claim that our proposed G-CCP based sparse muticast beamforming algorithm is general and can be used to solve the traditional multi-group multicast beamforming problem \cite{Luo_MultlGroup_multicast, ZhengZheng_mulitcell_multicast} as special cases.

\begin{algorithm}[h]
\caption{(SDR-CCP) SDR-CCP Based Sparse Multicast Beamforming Algorithm} \label{alg:SDR-CCP}
\begin{algorithmic}[0]
\STATE \textbf{Initialization:}
\begin{enumerate}
  \item Find a feasible starting point $\{\mathbf{W}_m^{(0)}\}$ by solving $\cal{P}_{\rm INI}$.
  \item Set the smoothness factor $\theta=\theta_0$, decaying factor $0< \beta < 1$ and small constant $\epsilon$.
\end{enumerate}
\STATE \textbf{Repeat}
\begin{enumerate}
   \item Solve $\mathcal{P}_{2}$ using CCP at the starting point $\{\mathbf{W}_m^{(0)}\}$ and denote the solution as $\{\mathbf{W}_m^{*}\}$
   \item Update $\theta  \gets \beta\theta$ and $\{\mathbf{W}_m^{(0)}\} \gets \{\mathbf{W}_m^{*}\}$
\end{enumerate}
\STATE \textbf{Until} $\theta < \epsilon$.
\STATE \textbf{If} ${\rm rank}(\mathbf{W}_m^{*}) = 1$, $\forall m$, apply EVD on $\{\mathbf{W}_m^{*}\}$ to obtain the final solution $\{ \mathbf{w}_m^{*} \}$ .
\STATE \textbf{Else}, apply Gaussian randomization and scaling to obtain the approximate solution $\{ \mathbf{w}_m^{*} \}$
\end{algorithmic}
\end{algorithm}

\begin{algorithm}[h]
\caption{(G-CCP) Generalized CCP Based Sparse Multicast Beamforming Algorithm} \label{alg:G-CCP}
\begin{algorithmic}[0]
\STATE \textbf{Initialization:}
\begin{enumerate}
  \item Solve $\cal{P}_{\rm INI}$ and denote the solution as $\{\mathbf{W}_m^{(0)}\}$.
  \item If ${\rm rank}(\mathbf{W}_m^{(0)}) = 1$, $\forall m$, apply EVD to obtain a feasible point $\{ \mathbf{w}_m^{(0)} \}$.
  \item Else, apply Gaussian randomization and scaling to obtain a feasible point $\{ \mathbf{w}_m^{(0)} \}$.
  \item Set the smoothness factor $\theta=\theta_0$, decaying factor $0< \beta < 1$, and small constant $\epsilon$.
\end{enumerate}
\STATE \textbf{Repeat}
\begin{enumerate}
   \item Solve $\cal{P}_{\rm 3}$ using CCP at the starting point $\{\mathbf{w}_m^{(0)}\}$ and denote the solution as $\{\mathbf{w}_m^{*}\}$
   \item Update $\theta  \gets \beta\theta$ and $\{\mathbf{w}_m^{(0)}\} \gets \{\mathbf{w}_m^{*}\}$
\end{enumerate}
\STATE \textbf{Until} $\theta < \epsilon$.
\STATE \textbf{Output} solution $\{\mathbf{w}_m^{*}\}$
\end{algorithmic}
\end{algorithm}

\begin{figure}[t]
\begin{centering}
\includegraphics[scale=.40]{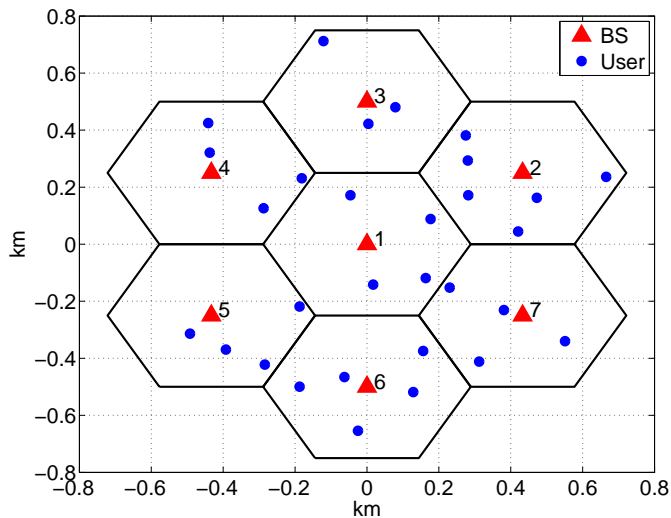}
 \caption{\small{Simulation scenario with 7 BSs, 30 mobile users randomly distributed.}}\label{fig:location}
\end{centering}
\end{figure}

\section{Simulation Results} \label{section:Simulation Results}
In this section, we provide comprehensive simulations to illustrate the performance of the proposed content-centric sparse beamforming algorithms.
We consider a hexagonal multicell cloud RAN network, where each BS is located at the center of a hexagonal-type cell as illustrated in Fig.~\ref{fig:location}. There are $N=7$ BSs in total and each BS has $L=4$ antennas. The distance between adjacent BSs is set to $500$m. The mobile users are uniformly and independently distributed in the network, excluding an inner circle of $50$m around each BS. All BSs are assumed to have equal cache size of $Y$ (e.g., $Y_n = Y,~\forall n$). The transmit antenna power gain at each BS is $10$dBi. The available channel bandwidth is $10$MHz. The distance-dependent pathloss is modeled as $PL(\text{dB}) = 148.1 + 37.6 \text{log}_{10}(d)$, where $d$ is the distance in kilometers. The log-normal shadowing parameter is assumed to be $8$dB. The small-scale fading is the normalized Rayleigh fading. The noise power spectral density $\sigma_k^2$ for all users is set to be $-172$dBm/Hz. The SINR target for each multicast group is $\gamma_m=10$dB, $\forall m$.
%
% For this network configuration, each simulation result is obtained by averaging over 100 independent channel realizations and user requests.

%
In the proposed updating rule of smoothness factor $\theta$, it generally requires the initial $\theta_0$ to be large enough so that the smooth function $f_\theta(x)$ can explore the entire parameter space during the iterations. In our simulation, we set $\theta_0 = \max\{\text{Tr} (\mathbf{W}^{(0)}_m\mathbf{J}_n ), \forall m, n\}$ for Alg.~\ref{alg:SDR-CCP} and $\theta_0 = \max\{\|\mathbf{w}_{m,n}^{(0)} \|_2^2, \forall m, n\}$ Alg.~\ref{alg:G-CCP}, respectively, where $\mathbf{W}^{(0)}_m$ is the optimal solution of $\mathcal{P}_{\text{INI}}$ and $\mathbf{w}_{m,n}^{(0)}$ is obtained from $\mathbf{W}^{(0)}_m$ by EVD (Gaussian randomization and scaling is applied if necessary). Our empirical results have shown that such $\theta_0$ is large enough so that further increasing it will not bring additional gain. For the decaying factor $\beta$, a larger $\beta$ can lead to more accurate result and hence better performance but the convergence speed can be slow; a smaller $\beta$ can speed up the convergence but the algorithm is more likely to get a suboptimal solution. In our simulation, we empirically choose $\beta=0.1$, which can strike a good balance between the convergence speed and the performance. The small constant $\epsilon$ is set to be $10^{-6}$.

In each simulation trial we consider $K=30$ active users and each user submits a content request independently to a database of $F=100$ contents. Each simulated result is obtained by averaging over $100$ independent simulation trials, unless stated otherwise. In each trial we only generate one set of user locations, channel realizations, and user requests for simulation simplicity. We assume the following two different content popularity distributions: 1) \emph{Unequal popularity}: among the $100$ contents, one content belongs to trending news with request probability $0.5$, and the other $99$ contents share the rest $0.5$ of the request probability following a Zipf distribution with skewness parameter $\alpha$. In general, large $\alpha$ means more user requests are concentrated on fewer popular contents. 2) \emph{Equal popularity}: all the $100$ contents are requested with equal probabilities. In our simulation the defaulted setting is the unequal popularity with $\alpha=1$. Each BS caches $Y=10$ contents if not specified otherwise.

The following three heuristic caching strategies are considered:
\begin{itemize}
  \item \emph{Popularity-aware Caching (PopC):} Each BS caches the most popular contents until its storage is full. In such caching scheme, the cached contents in all the BSs are the same if their cache sizes are the same. This strategy can bring significant opportunity for full cooperation when the content popularity distribution is highly non-uniform. On the other hand, if the popularity is equally distributed, the cache hit rate can be very low and may cause large backhaul burden.
  \item \emph{Random Caching (RanC):} Each BS caches the contents randomly with equal probabilities regardless of their popularity distribution.
  \item \emph{Probabilistic Caching (ProC):} Each BS caches a content randomly with probability depending on the content popularity, and the more popular the content is, the more likely it will be cached in each BS. This caching strategy can strike a good balance between cache hit rate and cooperative transmission gain.
\end{itemize}

\subsection{Comparison between SDR-CCP and Generalized CCP algorithms}
We first demonstrate the convergence behavior of the proposed two CCP based algorithms denoted as SDR-CCP and G-CCP, in Fig.~\ref{fig:convergence_behavior_SDR-CCP} and Fig.~\ref{fig:convergence_behavior_G-CCP}, respectively. For illustration purpose, the smoothness factor in all smooth functions is fixed at $\theta=0.01$. Popularity-aware caching is adopted. It can be seen clearly that both SDR-CCP and G-CCP converge within less than $10$ iterations for all the considered cases.

In Fig.~\ref{fig:SDR_DC_atan}, we plot the backhaul-power tradeoff curves achieved by the proposed two algorithms. The tradeoff curves are obtained by controlling the weight parameter $\eta$ between the backhaul cost and the transmit power cost. When $\eta\to 0$ (e.g., $\eta = 10^{-6}$), the total network cost only takes backhaul capacity into account. When $\eta\to+\infty$, the total network cost only counts the transmit power and the optimization problem reduces to the transmit power minimization problem $\mathcal{P}(\mathcal{Z}_{\mathbf{S}})$ with $s_{m,n} = 1,~\forall m,n$. Without loss of generality, we take the arctangent smooth function for example. It is observed that the G-CCP algorithm provides a better backhaul-power tradeoff than the SDR-CCP algorithm. In particular, at the same backhaul cost, G-CCP can achieve $1 \sim 5$ dB lower power cost than SDR-CCP. In the extreme case with transmit power minimization (i.e., $\eta\to+\infty$), our proposed problem $\mathcal{P}_0$ reduces to the multi-group QoS multicast beamforming problem in \cite{Luo_MultlGroup_multicast} and, correspondingly,  the SDR-CCP based algorithm reduces to the SDR method in \cite{Luo_MultlGroup_multicast}. It is seen that the proposed G-CCP algorithm still outperforms the SDR method in \cite{Luo_MultlGroup_multicast} by saving $0.5$dB power cost.

We also compare the simulation running time of the two CCP-based algorithms in Table~\ref{tab:complexity-eta}. The simulation is based on MATLAB R2012a and carried out on a Windows x64 machine with 3.2 GHz CPU and 4 GB RAM. We adopt the CVX package with SDPT3 solver in \cite{cvx} to solve the convex subproblem in each iteration of CCP. It is seen from Table~\ref{tab:complexity-eta} that the running time of G-CCP is around $46\% \sim 56 \% $ of the SDR-CCP algorithm in general. In the extreme case when $\eta \to+\infty$, the running time of G-CCP is only $33\%$ that of SDR-CCP. The complexity reduction of G-CCP is contributed by two facts. First, the problem size of $\mathcal{P}_3$ in G-CCP is smaller than that of ${\mathcal P}_2$ in SDR-CCP, as noted in Remark 1. Second, G-CCP requires less number of randomization and scaling steps as noted in Section~\ref{sec-discussion-algorithm}.

From the above comparison, it can be concluded that the G-CCP based algorithm is superior to the SDR-CCP algorithm in both performance and computational complexity. Therefore, in the rest of our simulation, we only use the G-CCP algorithm.

\begin{figure}[!htb]
\begin{centering}
\includegraphics[scale=.60]{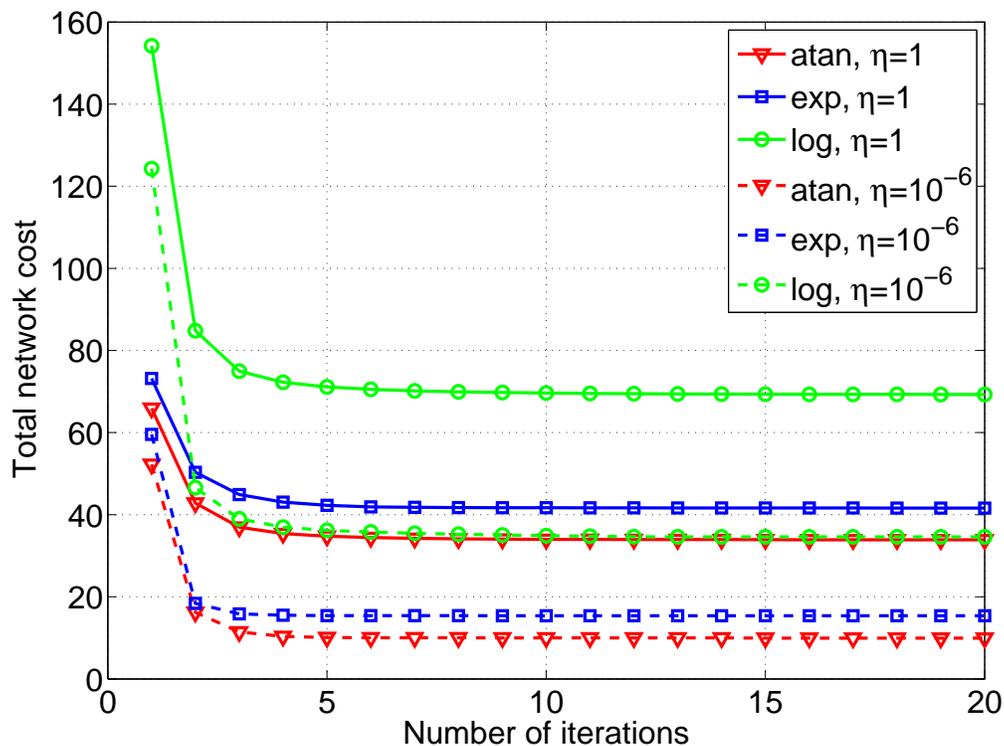}
 \caption{\small{Convergence behavior of the SDR-CCP based algorithm with $\theta = 0.01$.}}\label{fig:convergence_behavior_SDR-CCP}
\end{centering}
\end{figure}

\begin{figure}[!htb]
\begin{centering}
\includegraphics[scale=.60]{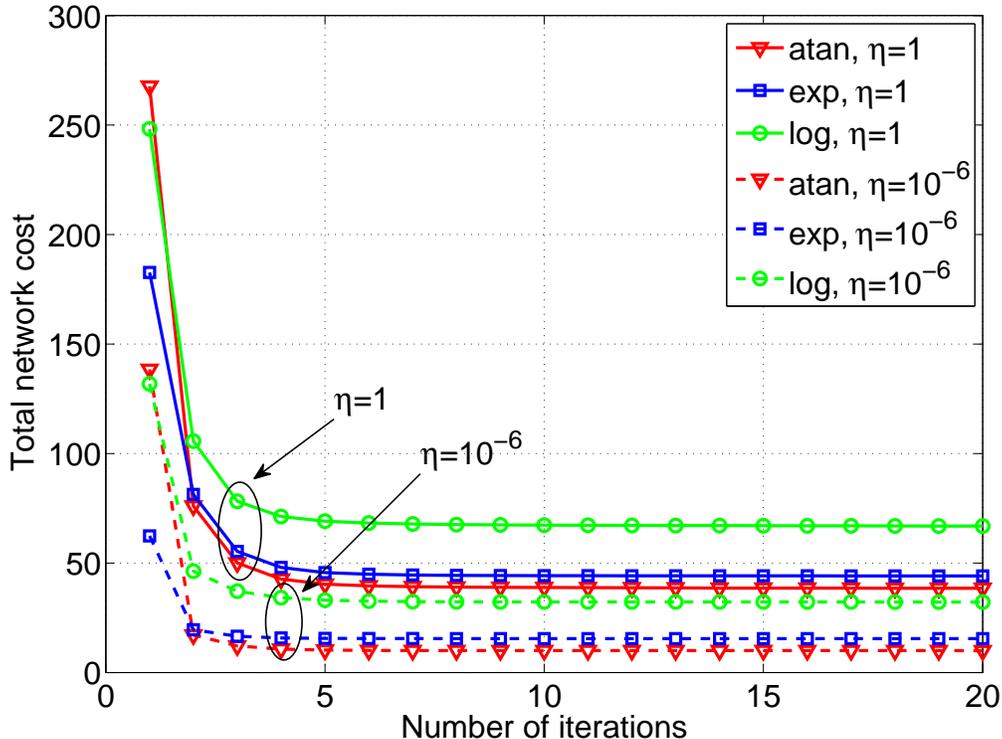}
 \caption{\small{Convergence behavior of the general CCP based algorithm with $\theta = 0.01$.}}\label{fig:convergence_behavior_G-CCP}
\end{centering}
\end{figure}

\begin{figure}[!htb]
\begin{centering}
\includegraphics[scale=.60]{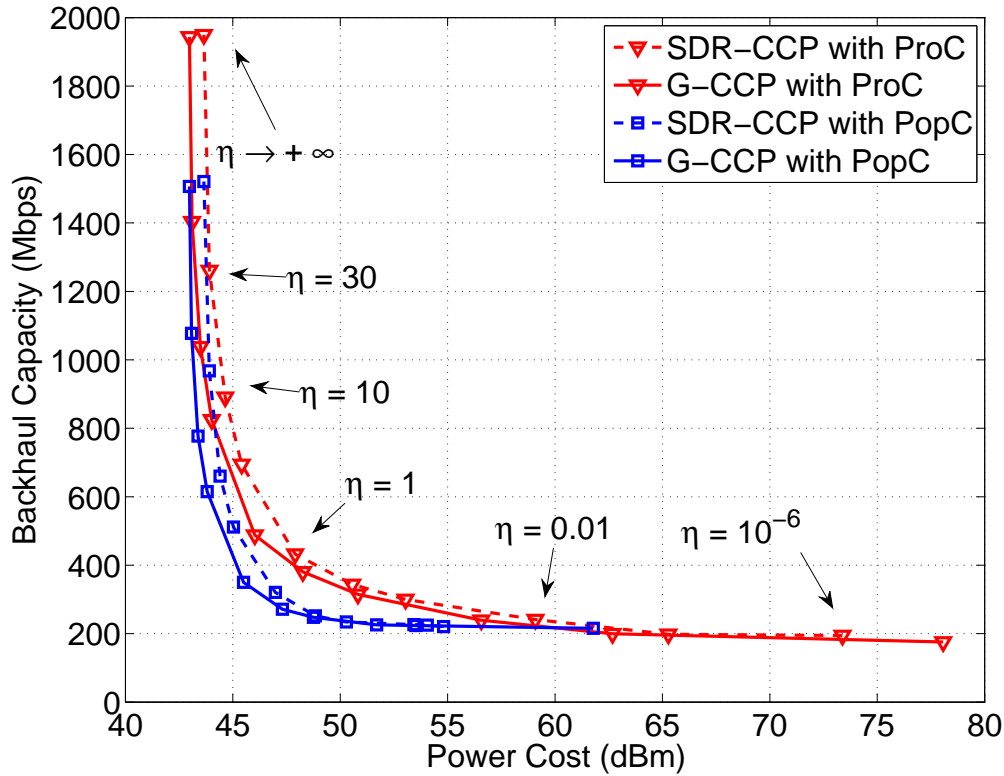}
 \caption{\small{Backhaul-power tradeoff comparison between SDR-CCP and G-CCP.}}\label{fig:SDR_DC_atan}
\end{centering}
\end{figure}

\begin{table}[!htb]
\caption{Comparison of running times (second) of SDR-CCP and G-CCP algorithms.}  \label{tab:complexity-eta}
\centering
\begin{tabular}{|c|c|c|c|c|c|c|c|c|c|c|}
\hline
$\eta$ & $10^{-6}$ & $10^{-3}$ & 0.01 & 0.1 & 0.3 & 1 & 5 & 10 & 30 & $+\infty$ \\
\hline
SDR-CCP (sec) & 909.24 & 781.02 & 705.90 & 650.14 & 642.48 & 620.15 & 582.01 & 558.53 & 482.48 & 247.97 \\
\hline
G-CCP (sec) & 510.87& 416.66 & 365.84 & 327.06 & 320.32 & 307.66 & 270.43 & 255.65 & 222.24 & 81.95 \\
\hline
\end{tabular}
\end{table}

\subsection{Comparison of Different Smooth Functions}
In Fig. \ref{fig:smooth-function}, we compare the performance of the three smooth functions with popularity-aware caching.
Each result is obtained by averaging over $200$ independent simulation trials.
It is seen that the three functions have similar performance for a wide range of $\eta$. In the extreme case when $\eta\to 0$, the arctangent function can achieve slightly lower backhaul cost than the other two functions. Thus, in the rest of our simulation, only the arctangent function is adopted.

\begin{figure}[!htb]
\begin{centering}
\includegraphics[scale=.45]{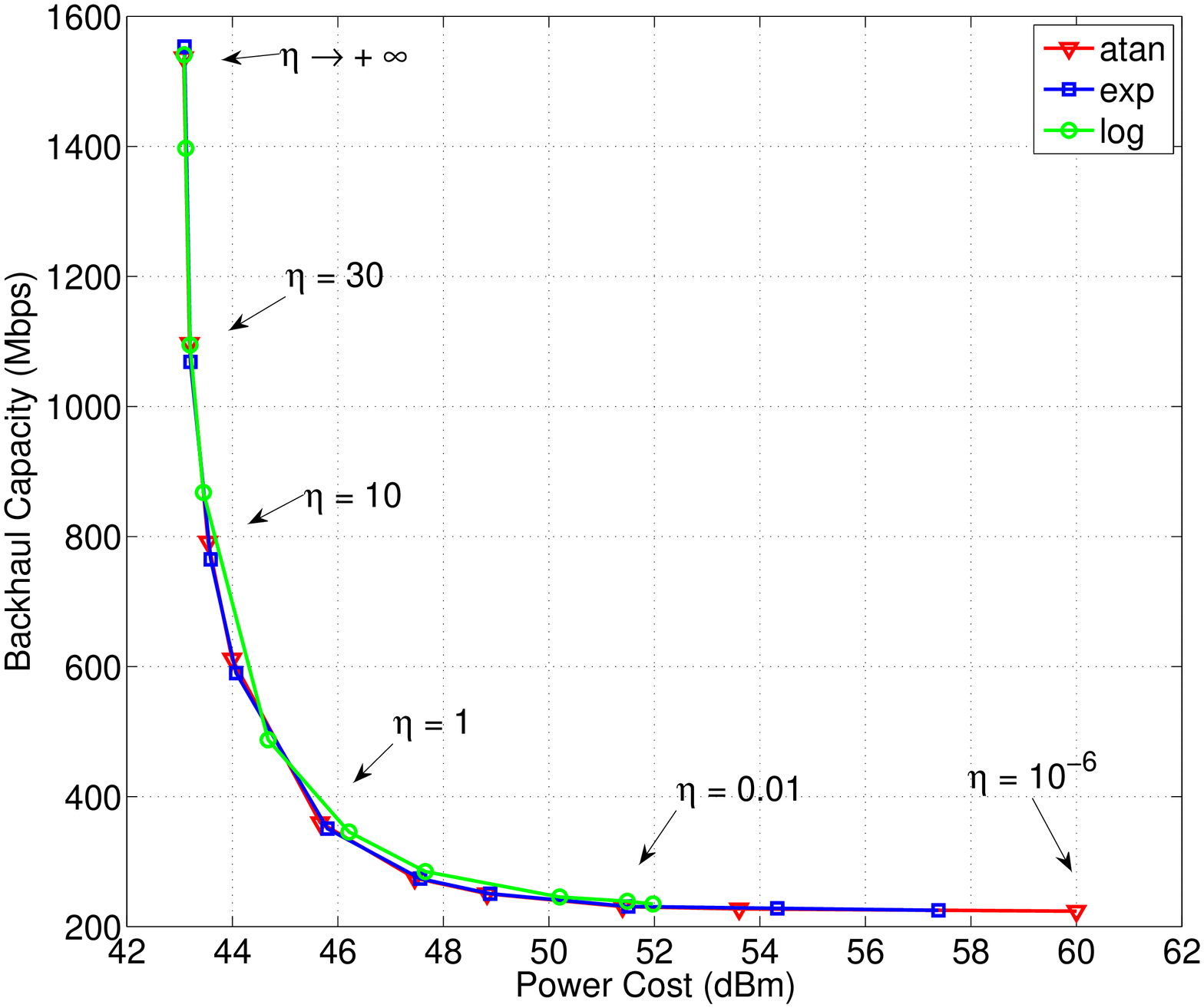}
 \caption{\small{Performance comparison of different smooth functions}}\label{fig:smooth-function}
\end{centering}
\end{figure}

\subsection{Effects of Caching}

\begin{figure}[!htb]
\begin{centering}
\includegraphics[scale=.60]{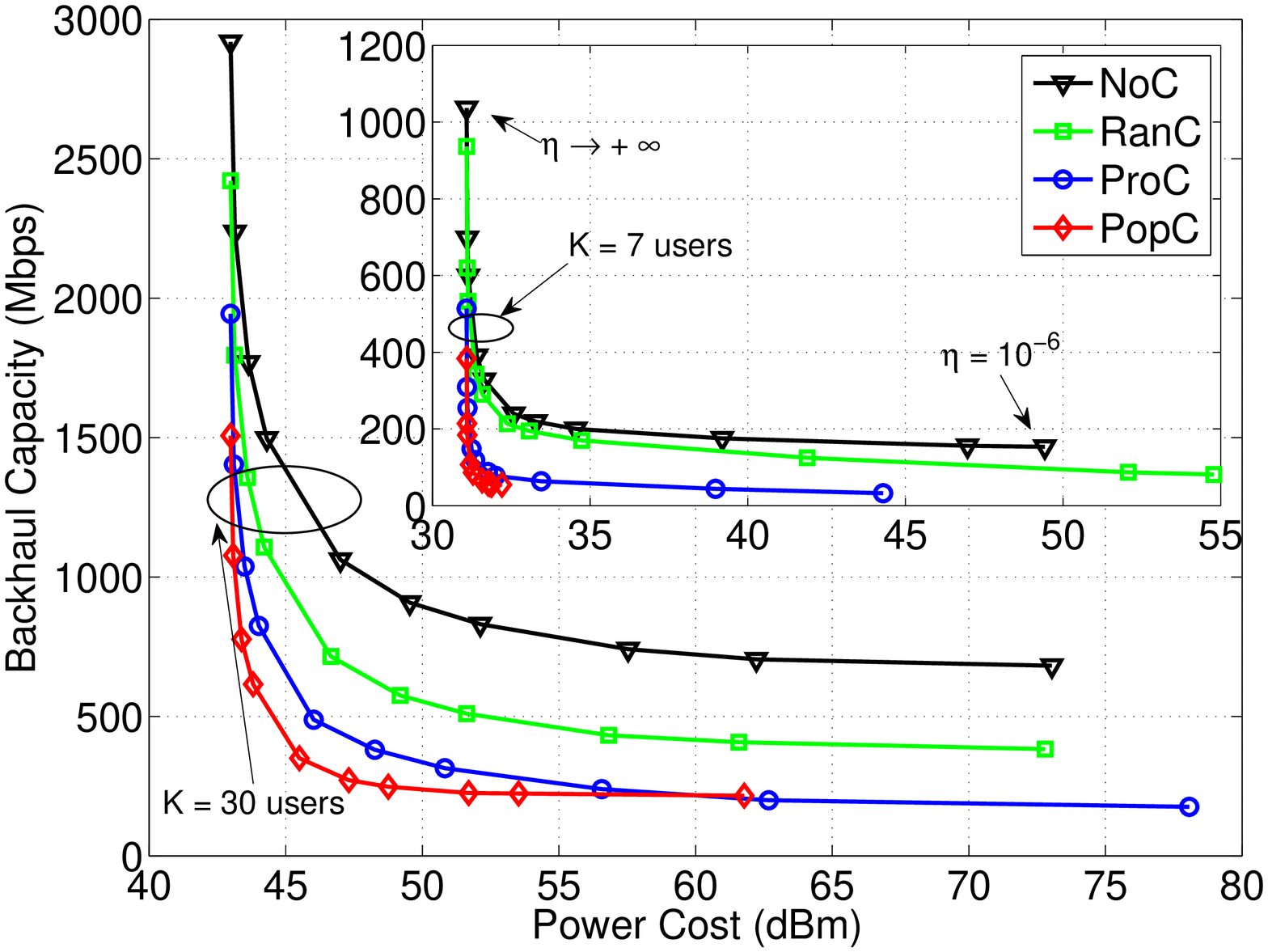}
 \caption{\small{Performance comparison of different caching strategies for unequal content popularity with $\alpha=1$.}}\label{fig:alpha1}
\end{centering}
\end{figure}

\begin{figure}[!htb]
\begin{centering}
\includegraphics[scale=.60]{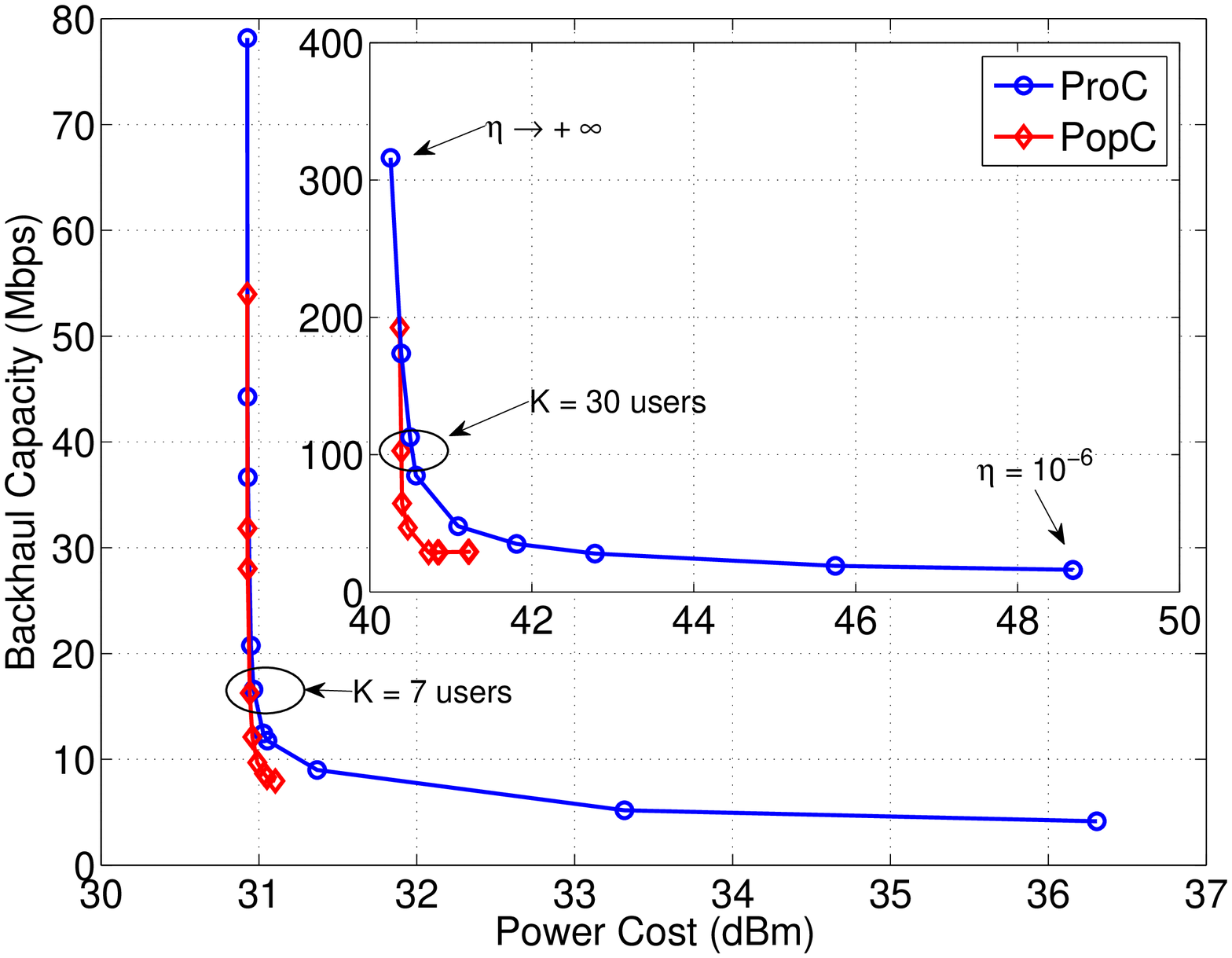}
 \caption{\small{Performance comparison of different caching strategies for unequal content popularity with $\alpha=2$.}}\label{fig:alpha2}
\end{centering}
\end{figure}

\begin{figure}[!htb]
\begin{centering}
\includegraphics[scale=.60]{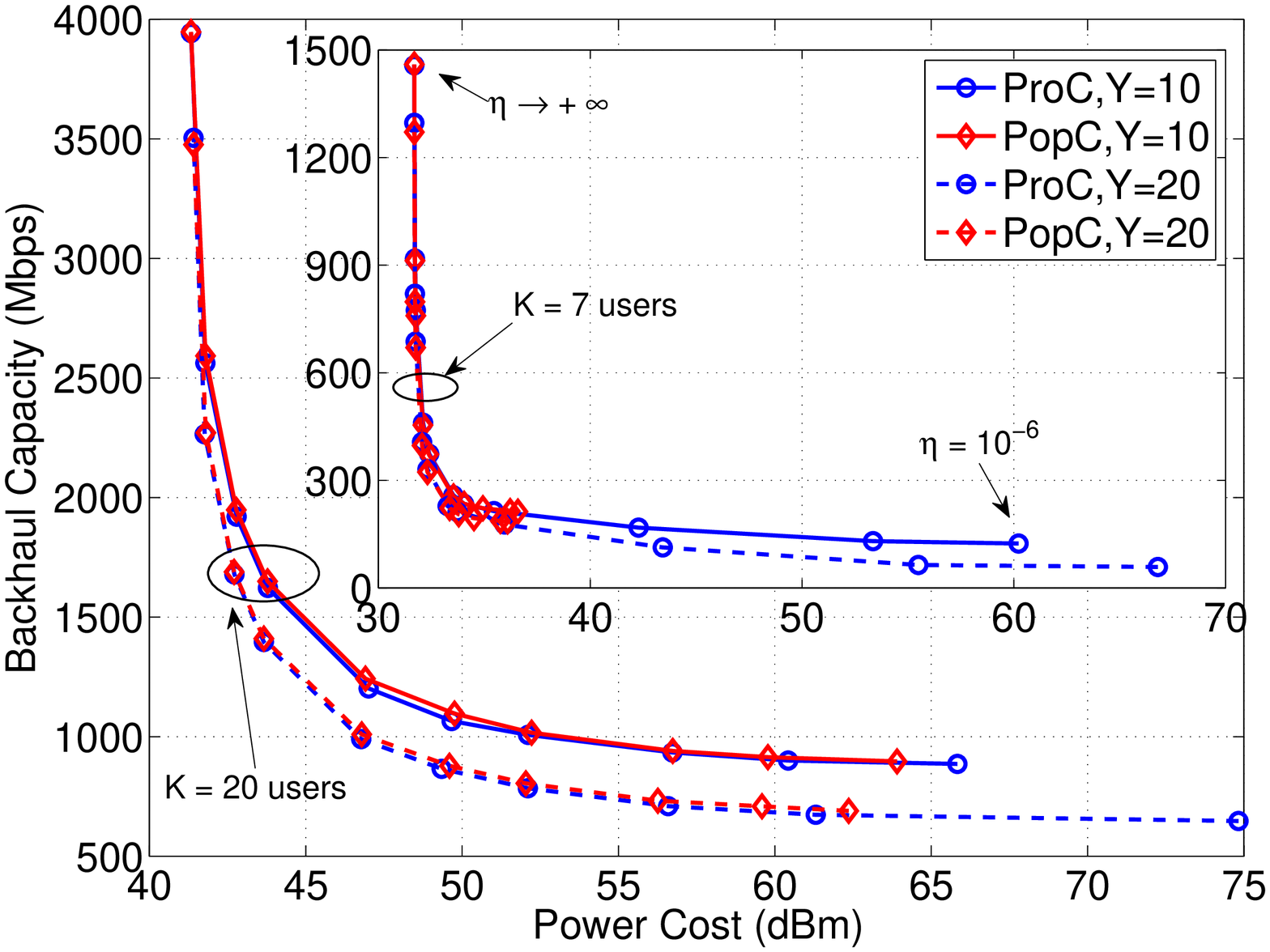}
 \caption{\small{Performance comparison of different caching strategies for equal content popularity.}}\label{fig:alpha0}
\end{centering}
\end{figure}

In Fig.~\ref{fig:alpha1}, we compare the performance of different caching strategies for unequal content popularity with skewness factor $\alpha=1$. Both $K=30$ and $K=7$ users are considered and the cache size is $Y=10$.
%
%In Fig.~\ref{fig:alpha0}, we show the results for equal content popularity with user number $K=20, 7$ and cache size $Y=10, 20$, where the probabilistic caching reduces to the random caching.
%
It is seen that compared with the network without cache, the cache-enabled network with a reasonably designed cache can dramatically reduce the backhaul cost, hence improving the backhaul-power tradeoff.

Next we focus on the comparison between the popularity-aware caching (PopC) and the probabilistic caching (ProC) under both unequal content popularity in Figs~\ref{fig:alpha1}-\ref{fig:alpha2} and equal content popularity in Fig.~\ref{fig:alpha0}. The simulation setting in Fig.~\ref{fig:alpha2} is the same as Fig.~\ref{fig:alpha1} except the skewness factor $\alpha=2$.  The results for equal content popularity in Fig.~\ref{fig:alpha0} are obtained at $K=20$ and $K=7$ users with cache size $Y=10$ and $Y=20$.

From Fig.~\ref{fig:alpha1} and Fig.~\ref{fig:alpha2} with unequal content popularity, it can be generally observed that PopC outperforms ProC for a wide range of the backhaul-power tradeoff parameter $\eta$ in the total network cost, except the limiting case $\eta\to 0$. Intuitively, since PopC places the same and most popular contents in each BS, it can enjoy the maximum transmit cooperation gain if the network is not extremely concerned with the backhaul cost. However, if the backhaul cost is the primary concern of the network (i.e., $\eta\to 0$), then ProC performs better than PopC. In specific, the achievable minimum backhaul cost of ProC is only around $50\%$-$60\%$ of that of PopC with $K=7$ users.  This is because when the user number is small, each content can be served by only a small number of BSs cooperatively to meet their SINR targets and hence it is more likely to find the requested contents in the local cache of all the serving BSs when ProC is adopted.

From Fig.~\ref{fig:alpha0} with equal content popularity, it is interestingly observed that when $K=20$, PopC and ProC (ProC is equivalent to RanC in this case) perform almost the same in the entire backhaul-power tradeoff curve, in sharp contrary to the common belief that ProC may outperform PopC because of large cache hit rate. This observation is because when the user number is large, each content should be better served by a large number of BSs cooperatively to meet their SINR targets, and hence the benefits of randomness in ProC may vanish since all the serving BSs need to access the requested content either from its local cache or from the CP via backhaul. On the other hand, when $K=7$, ProC still shows advantage over PopC by a significant reduction of achievable minimum backhaul cost as expected.

The above observations indicate that the design of more advanced caching strategies for network performance optimization should not only take into account the content popularity, but also the user density as well as the optimization objective.

Nevertheless, from \Cref{fig:alpha1,fig:alpha2,fig:alpha0}, one can see that the minimum transmit power cost of the network is the same for all caching strategies, since it is only determined by the target SINR constraint of each multicast group ($10$dB in the simulation).

\subsection{Multicast versus Unicast}

In Fig.~\ref{fig:multicast_unicast}, we compare the performance of multicast transmission and unicast transmission at different number of active users. Here, unequal content popularity with skewness factor $\alpha=1$ is assumed and the popularity-aware caching with cache size $10$ is applied. In unicast transmission, we design different beamformers for different users regardless of their requested contents.  Note that if a BS serves multiple users requesting for a same content that it does not cache, the central processor only needs to distribute to the BS one copy of the content at the maximum requested data rate. This is to ensure a fair comparison on the backhaul cost. The iterative reweighted $\ell_1$-norm based sparse unicast beamforming algorithm proposed in \cite{usercentric_ICCC15} is adopted. Such comparison between multicast transmission and unicast transmission is essentially a comparison between the proposed content-centric design and the traditional user-centric design.

It is seen from Fig.~\ref{fig:multicast_unicast} that when there are $K=30$ active users in the network, the unicast transmission (user-centric design) performs very poorly. This is mainly because there are only $N\times L=28$ transmit antennas in total in the considered cloud RAN and hence there is not enough spatial dimension to construct $30$ beamformers for unicast transmission. On the other hand, the proposed multicast transmission (content-centric design) performs well since it exploits the content popularity among different users and needs to design fewer beamformer.
When the number of active users decreases, it is observed that the backhaul-power tradeoff for unicast transmission is improved, but it is still considerably inferior to multicast transmission. In the extreme case when only power cost is concerned ($\eta \to +\infty$), it is seen that unicast transmission requires $3$ dB higher power than multicast transmission when there are $K=20$ active users. A question one may ask is whether multicast transmission still outperforms unicast transmission in the special scenario where the users requesting a same content happen to be located in geographically disjoint areas covered by different BSs. We note that such scenario belongs to the special case where the network-wide user channel vectors are orthogonal and hence unicast beamforming and multicast beamforming are equivalent.

The above observations demonstrate significant advantage of the proposed content-centric transmission design over the conventional user-centric transmission design for the considered content request model.

\begin{figure}[!htb]
\begin{centering}
\includegraphics[scale=.60]{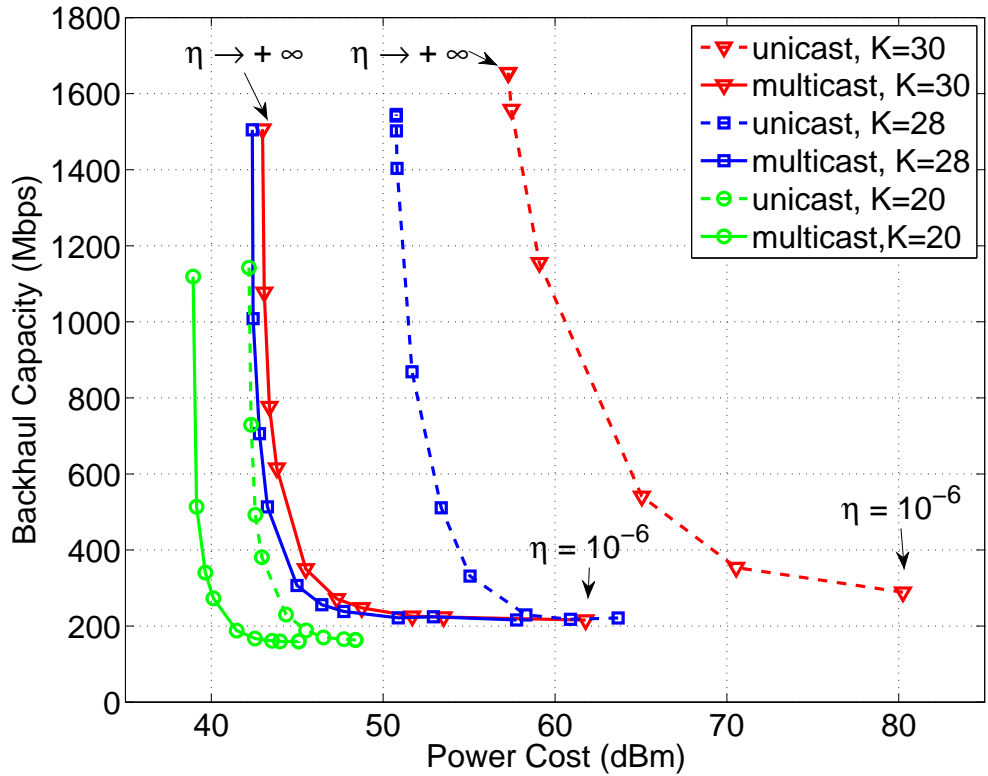}
 \caption{\small{Performance comparison between multicast transmission and unicast transmission.}}\label{fig:multicast_unicast}
\end{centering}
\end{figure}

\subsection{Effectiveness of Smooth Approximation}

 In this subsection we validate the effectiveness of the proposed smooth approximation. The ``global optimal'' solution (subject to the rank-1 condition) obtained by exhaustive search as mentioned in Section \ref{sec:P0} is considered as a benchmark. The cache-aware greedy BS clustering algorithm extended from \cite{usercentric_ICCC15} is also compared. Due to the significantly high computational complexity of the exhaustive search, we are only able to conduct the simulation in a small network with $N=3$ BS each having $L=3$ antennas, $K=6$ users, and $F=4$ files. The content popularity distributions are $\{ 0.48, 0.24, 0.16, 0.12\}$ and each BS only caches the $2$ most popular contents. The performance comparison is depicted in Fig.~\ref{fig:exhaust}. Here, the G-CCP algorithm with arctangent smooth function is simulated and each result is obtained by averaging over $200$ independent simulation trials.

\begin{figure}[h]
\begin{centering}
\includegraphics[scale=.45]{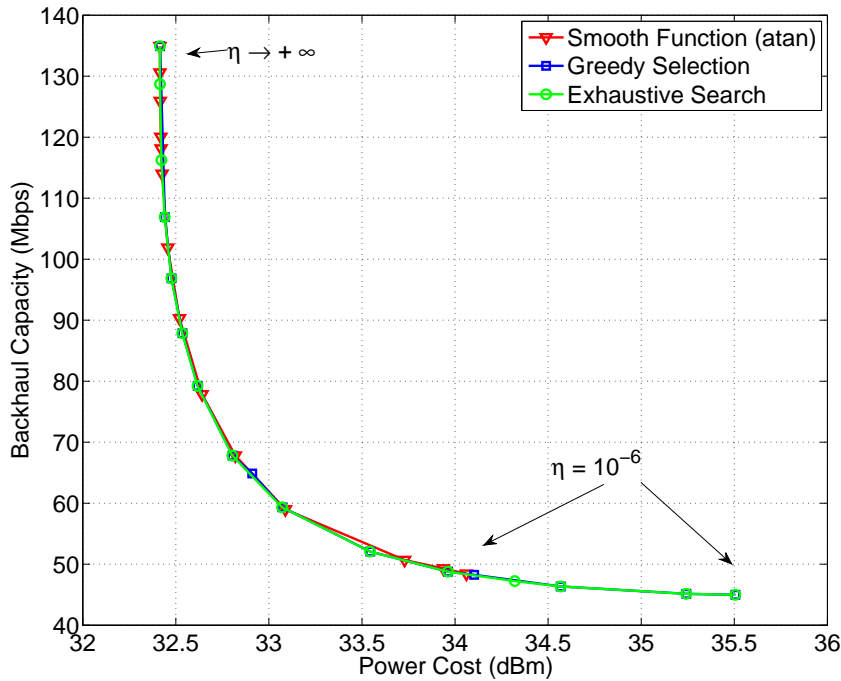}
 \caption{\small{Comparison between smooth approximation, greedy algorithm, and exhaustive search in a small network with $N=3$ BS, $K=6$ users, and $F=4$ files. }}\label{fig:exhaust}
\end{centering}
\end{figure}

From Fig.~\ref{fig:exhaust}, we observe that the performance of the smooth-function based sparse beamforming algorithm is very close to the global optimal solution, except having slightly higher minimum backhaul cost at $\eta\to 0$. This confirms the accuracy of the proposed smooth approximation at least in the considered small network. It is also observed that the greedy algorithm performs almost identically with the exhaustive search in the given scenario.

Note that when the number of base stations or users increases, the number of non-convex QCQP problems to solve in the exhaustive search and the greedy algorithm grows exponentially and quadratically, respectively, in general. However, these system parameters have no direct impact on the number of convex QCQP problems to solve in the proposed G-CCP algorithm.

\subsection{Effect of Peak Power Constraints}
Finally, we show the effect of the realistic per-antenna or per-BS peak power constraints as mentioned in Section III-A. We use the same simulation setting in Fig.~\ref{fig:alpha1} but with PopC, $K=30$ users, and one channel realization only for simplicity. The per-antenna peak power is set as 35dBm and 40dBm. The per-BS peak power is set to be $L$ (each BS has $L$ antennas) times of per-antenna peak power. It is observed from Fig.~\ref{fig:powerconstraned} that when $\eta$ is small (i.e., the backhaul has more weight), imposing peak power constraints suffers additional backhaul cost. More stringent peak power constraints result in higher minimum achievable backhaul cost. This is because the peak power constraints reduce the freedom to trade the total power for backhaul. It is also seen that when $\eta$ is large (i.e., the total transmit power has more weight), the given peak power constraints are mostly inactive and hence have little impact on the performance.

\begin{figure}[h]
\begin{centering}
\includegraphics[scale=.45]{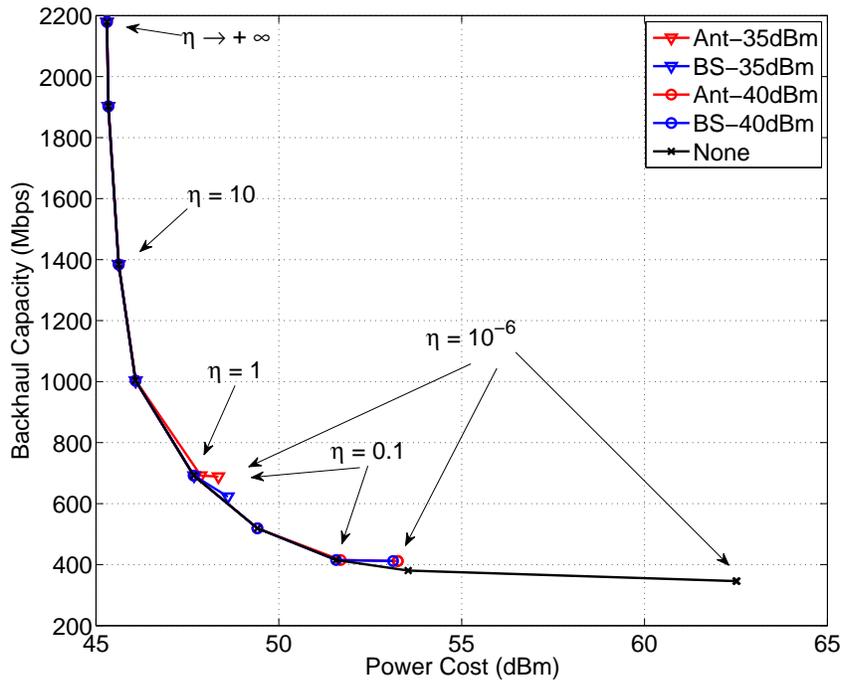}
 \caption{\small{Performance comparison under per-antenna constraints, per-BS constraints and no power constraint. }}\label{fig:powerconstraned}
\end{centering}
\end{figure}

\section{Conclusion} \label{section:Conclusion}
This paper investigates the joint design of content-centric BS clustering and multicast beamforming design in the cache-enabled cloud RAN  for wireless content delivery. We  first formulated an MINLP problem with the objective of minimizing the total network cost, modeled by the weighted sum of backhaul cost and transmit power cost, subject to the QoS constraint for each multicast group. %
Based on such problem formulation we have proved that all the BSs which cache the content requested by a multicast group can be always included in the BS cluster of this content, regardless of their channel conditions.
To make the problem more tractable, we reformulated the equivalent sparse beamforming design problem which takes the dynamic BS clustering inexplicitly. By adopting the smoothed $\ell_0$-norm approximation, we then converted the sparse beamforming problem into two forms of DC programs.
The first form has DC objective and convex constraints by using SDR approach and SDR-based CCP algorithm is introduced to find the local optimal solution.
The second DC program is in the general form with DC expressions in both objective and constraints, and is solved using the generalized CCP algorithm. Simulation results showed that the generalized-CCP based sparse beamforming algorithm outperforms the SDR-CCP based algorithm in both transmission power efficiency and computation efficiency.  We also compared three heuristic caching strategies by simulation in the cache-enabled cloud RAN, random caching, popularity-aware caching, and probabilistic caching. It is shown that popularity-aware caching in general provides the best tradeoff curves between backhaul cost and transmit power cost. But in the extreme case where only backhaul cost is considered in the total network cost, the probabilistic caching outperforms the popularity-aware caching when user density is low. Last but not least, simulation results demonstrated that the proposed content-centric transmission (i.e. content-centric BS clustering and multicast beamforming) offers significant reduction in total network cost than the conventional user-centric design (i.e. user-centric BS clustering and unicast beamforming) under the considered content-request model.

This work can be viewed as an initial attempt from the physical layer toward the design of content-centric wireless networks. There are many interesting directions to pursue based on this work. For instance, the problem formulation in this work assumes that the content placement is given and fixed. As observed by simulation, the performance of the considered heuristic caching strategies depends on not only the content popularity,
but also the user density as well as the network optimization objective. As such, it is of particular importance to investigate the globally optimal caching strategy through the joint design of mixed timescale cache placement/replacement and content delivery. In addition, the proposed sparse multicast beamforming algorithms are centralized and may be difficult to implement in very large networks. To be more practical in cloud RAN architectures with large number of users, low-complexity or distributed implementation of these algorithms is greatly desired.

\section*{Appendix A: Proof of Proposition  \ref{Proposition-P1} }
Assume that the content requested by user group $m^\ast$ is already cached in BS $n^\ast$, i.e., $c_{f_{m^\ast},n^\ast}=1$.  Consider an arbitrary BS clustering matrix $\mathbf{S}'$ with $ s'_{m^\ast,n^\ast} = 0$. The minimum total network cost incurred by the given $\mathbf{S}'$ is denoted as $C'_N = C'_B + \eta C'_P$, where $C_B'$ is determined by \eqref{eqn:backhaul cost} with the summation term $s_{m^\ast,n^\ast}(1-c_{f_{m^\ast}, n^\ast})R_{m^\ast}$ being zero, and $C'_P$ is the optimal solution of the following power minimization problem at given $\mathbf{S}'$
\begin{subequations}
\begin{align}
\mathcal{P}(\mathcal{Z}_{\mathbf{S}'}): ~C'_P=\mathop{\text{minimize}}_{ \{ \mathbf{w}_{m,n}\} } \quad & \sum_{m=1}^M \sum_{n=1}^N \lVert \mathbf{w}_{m,n} \rVert_2^2 \\
\text{subject to} \quad  &\eqref{eqn:SINR-constraint} \nonumber \\
& \mathbf{w}_{m,n} = 0,~\forall (m,n) \in \mathcal{Z}_{\mathbf{S}'} \label{eqn:P1-cluster}
\end{align} \label{eqn:power-minimization-P1}
\end{subequations}
Here, $\mathcal{Z}_{\mathbf{S}'}$ is the set of inactive BS-content associations at given $\mathbf{S}'$, i.e., $\mathcal{Z}_{\mathbf{S}'} = \{ (m,n)| [\mathbf{S}']_{m,n}=0\}$. Obviously we have $(m^\ast,n^\ast) \in \mathcal{Z}_{\mathbf{S}'}$.

Now, define a new BS clustering matrix $\mathbf{S}''$ which only differs from $\mathbf{S}'$ at the $(m^\ast,n^\ast)$-th element, i.e., $s''_{m^\ast,n^\ast}=1$. Then the minimum total network cost incurred by  $\mathbf{S}''$ can be written as $C''_N = C''_B + \eta C''_P$. Here, $C''_B$ is also determined by \eqref{eqn:backhaul cost} with the summation term $s_{m^\ast,n^\ast}(1-c_{f_{m^\ast}, n^\ast})R_{m^\ast}$ being zero again due to $c_{f_{m^\ast}, n^\ast}=1$ and hence one has $C''_B=C'_B$. $C''_P$ is the optimal solution of the following power minimization problem at given $\mathbf{S}''$
\begin{subequations}
\begin{align}
\mathcal{P}(\mathcal{Z}_{\mathbf{S}''}):~C''_P=\mathop{\text{minimize}}_{ \{ \mathbf{w}_{m,n}\} } \quad & \sum_{m=1}^M \sum_{n=1}^N \lVert \mathbf{w}_{m,n} \rVert_2^2 \\
\text{subject to} \quad  &\eqref{eqn:SINR-constraint}  \nonumber \\
& \mathbf{w}_{m,n} = 0,~\forall (m,n) \in \mathcal{Z}_{\mathbf{S}''} \label{eqn:P2-cluster}
\end{align} \label{eqn:power-minimization-P2}
\end{subequations}
By definition, the set of inactive BS-content association at $\mathbf{S}''$ satisfies:
\begin{equation}
\mathcal{Z}_{\mathbf{S}''} = \mathcal{Z}_{\mathbf{S}'} \backslash (m^\ast,n^\ast). \label{eqn:cluster_set}
\end{equation}
By observing $\mathcal{P}(\mathcal{Z}_{\mathbf{S}'})$ and $\mathcal{P}(\mathcal{Z}_{\mathbf{S}''})$ closely, it is clear that the feasible set of $\mathcal{P}(\mathcal{Z}_{\mathbf{S}'})$ is only a subset of that of $\mathcal{P}(\mathcal{Z}_{\mathbf{S}''})$ due to \eqref{eqn:cluster_set}. Therefore, we have $C'_P \geq C''_P$. Together with $C'_B = C''_B$, we obtain that the two total network costs satisfy $C'_N \geq C''_N$.
 This means that, for any BS clustering matrix $\mathbf{S}'$ with $s'_{m^\ast,n^\ast} = 0$, we can always find another BS clustering matrix $\mathbf{S}''$ with $s''_{m^\ast,n^\ast} = 1$ such that it can achieve a total network cost no larger than that of $\mathbf{S}'$. Therefore, without loss of optimality, one can set $s_{m^\ast,n^\ast} = 1$ in $\mathcal{P}_0$.

\bibliographystyle{IEEEtran}
\bibliography{IEEEabrv,content-centric-beamforming}

% Generated by IEEEtran.bst, version: 1.13 (2008/09/30)
\begin{thebibliography}{10}
\providecommand{\url}[1]{#1}
\csname url@samestyle\endcsname
\providecommand{\newblock}{\relax}
\providecommand{\bibinfo}[2]{#2}
\providecommand{\BIBentrySTDinterwordspacing}{\spaceskip=0pt\relax}
\providecommand{\BIBentryALTinterwordstretchfactor}{4}
\providecommand{\BIBentryALTinterwordspacing}{\spaceskip=\fontdimen2\font plus
\BIBentryALTinterwordstretchfactor\fontdimen3\font minus
  \fontdimen4\font\relax}
\providecommand{\BIBforeignlanguage}[2]{{%
\expandafter\ifx\csname l@#1\endcsname\relax
\typeout{** WARNING: IEEEtran.bst: No hyphenation pattern has been}%
\typeout{** loaded for the language `#1'. Using the pattern for}%
\typeout{** the default language instead.}%
\else
\language=\csname l@#1\endcsname
\fi
#2}}
\providecommand{\BIBdecl}{\relax}
\BIBdecl

\bibitem{usercentric_ICCC15}
E.~Chen and M.~Tao, ``User-centric base station clustering and sparse
  beamforming for cache-enabled cloud {RAN},'' in \emph{Proc. IEEE/CIC ICCC},
  2015.

\bibitem{contentcentric_Globecom15}
H.~Zhou, M.~Tao, E.~Chen, and W.~Yu, ``Content-centric multicast beamforming in
  cache-enabled cloud radio access networks,'' in \emph{Proc. IEEE Global
  Communications Conf. (Globecom)}, 2015.

\bibitem{Cisco2015}
Cisco, ``Cisco visual networking index: Global mobile data traffic forecast
  update 2014-2019,'' \emph{White Paper}, Feb 2015.

\bibitem{LiuHui-WCM14}
H.~Liu, Z.~Chen, X.~Tian, X.~Wang, and M.~Tao, ``On content-centric wireless
  delivery networks,'' \emph{{IEEE} Wireless Commun. Mag.}, vol.~21, no.~6, pp.
  118--125, December 2014.

\bibitem{Femtocaching-mag13}
N.~Golrezaei, A.~Molisch, A.~Dimakis, and G.~Caire, ``Femtocaching and
  device-to-device collaboration: A new architecture for wireless video
  distribution,'' \emph{{IEEE} Commun. Mag.}, vol.~51, no.~4, pp. 142--149,
  April 2013.

\bibitem{eMBMS}
D.~Lecompte and F.~Gabin, ``Evolved multimedia broadcast/multicast service
  ({eMBMS}) in {LTE}-advanced: overview and rel-11 enhancements,'' \emph{{IEEE}
  Commun. Mag.}, vol.~50, no.~11, pp. 68--74, Nov. 2012.

\bibitem{multicast06}
N.~D. Sidiropoulos, T.~N. Davidson, and Z.-Q. Luo, ``Transmit beamforming for
  physical-layer multicastings,'' \emph{{IEEE} Trans. Signal Process.},
  vol.~54, no.~6, p. 2239¨C2251, Jun. 2006.

\bibitem{content_caching}
X.~Wang, M.~Chen, T.~Taleb, A.~Ksentini, and V.~Leung, ``{Cache in the air:
  Exploiting content caching and delivery techniques for 5G systems},''
  \emph{{IEEE} Commun. Mag.}, vol.~52, no.~2, pp. 131--139, Feb. 2014.

\bibitem{proactive_caching}
E.~Bastug, M.~Bennis, and M.~Debbah, ``Living on the edge: The role of
  proactive caching in {5G} wireless networks,'' \emph{{IEEE} Commun. Mag.},
  vol.~52, no.~8, pp. 82--89, Aug. 2014.

\bibitem{CRAN_mag}
P.~Rost, C.~Bernardos, A.~Domenico, M.~Girolamo, M.~Lalam, A.~Maeder,
  D.~Sabella, and D.~W\"ubben, ``{Cloud technologies for flexible 5G radio
  access networks},'' \emph{{IEEE} Commun. Mag.}, vol.~52, no.~5, pp. 68--76,
  May 2014.

\bibitem{cooperative_jsac}
D.~Gesbert, S.~Hanly, H.~Huang, S.~Shamai~Shitz, O.~Simeone, and W.~Yu,
  ``Multi-cell {MIMO} cooperative networks: A new look at interference,''
  \emph{{IEEE} J. Sel. Areas Commun.}, vol.~28, no.~9, pp. 1380--1408, Dec.
  2010.

\bibitem{ZQLuo_jsac}
M.~Hong, R.~Sun, H.~Baligh, and Z.-Q. Luo, ``Joint base station clustering and
  beamformer design for partial coordinated transmission in heterogeneous
  networks,'' \emph{{IEEE} J. Sel. Areas Commun.}, vol.~31, no.~2, pp.
  226--240, Feb. 2013.

\bibitem{Tony_trans}
J.~Zhao, T.~Quek, and Z.~Lei, ``Coordinated multipoint transmission with
  limited backhaul data transfer,'' \emph{{IEEE} Trans. Wireless Commun.},
  vol.~12, no.~6, pp. 2762--2775, Jun. 2013.

\bibitem{WeiYu_globecom}
B.~Dai and W.~Yu, ``Sparse beamforming for limited-backhaul network {MIMO}
  system via reweighted power minimization,'' in \emph{Proc. IEEE Global
  Communications Conf. (Globecom)}, 2013.

\bibitem{WeiYu_access}
------, ``Sparse beamforming and user-centric clustering for downlink cloud
  radio access network,'' \emph{IEEE Access}, vol.~2, pp. 1326--1339, 2014.

\bibitem{smooth_trans}
F.~Zhuang and V.~Lau, ``Backhaul limited asymmetric cooperation for {MIMO}
  cellular networks via semidefinite relaxation,'' \emph{{IEEE} Trans. Signal
  Process.}, vol.~62, no.~3, pp. 684--693, Feb. 2014.

\bibitem{JunZhang_trans}
Y.~Shi, J.~Zhang, and K.~Letaief, ``Group sparse beamforming for green
  {Cloud-RAN},'' \emph{{IEEE} Trans. Wireless Commun.}, vol.~13, no.~5, pp.
  2809--2823, May 2014.

\bibitem{zipf_distribution}
L.~Breslau, P.~Cao, L.~Fan, G.~Phillips, and S.~Shenker, ``Web caching and
  {Zipf}-like distributions: evidence and implications,'' in \emph{Proc. IEEE
  INFOCOM}, 1999, pp. 126--134.

\bibitem{Femtocaching_trans}
K.~Shanmugam, N.~Golrezaei, A.~Dimakis, A.~Molisch, and G.~Caire,
  ``Femtocaching: Wireless content delivery through distributed caching
  helpers,'' \emph{{IEEE} Trans. Inf. Theory}, vol.~59, no.~12, pp. 8402--8413,
  Dec. 2013.

\bibitem{wang-chen-liu14}
K.~Wang, Z.~Chen, and H.~Liu, ``Push-based wireless converged networks for
  massive multimedia content delivery,'' \emph{{IEEE} Trans. Wireless Commun.},
  vol.~12, no.~5, pp. 2894--2905, May 2014.

\bibitem{CoMPcloud_SCVT15}
D.~Christopoulos, S.~Chatzinotas, and B.~Ottersten, ``Cellular-broadcast
  service convergence through caching for {CoMP} cloud {RANs},'' in \emph{Proc.
  IEEE Symposium on Communications and Vehicular Technology in the Benelux
  (SCVT)}, 2015.

\bibitem{JunZhang_pimrc}
X.~Peng, J.-C. Shen, J.~Zhang, and K.~Letaief, ``Joint data assignment and
  beamforming for backhaul limited caching networks,'' in \emph{International
  Symposium on Personal, Indoor and Mobile Radio Communications (PIMRC)}, 2014.

\bibitem{Boyd_reweighted_l1}
E.~Candes, M.~Wakin, and S.~Boyd, ``Enhancing sparsity by reweighted $\ell_1$
  minimization,'' \emph{Journal of Fourier Analysis and Applications}, vol.~14,
  no. 5--6, pp. 877--905, 2008.

\bibitem{yuille2003cccp}
A.~L. Yuille and A.~Rangarajan, ``The concave-convex procedure,'' \emph{Neural
  computation}, vol.~15, no.~4, pp. 915--936, 2003.

\bibitem{smola2005kernel}
A.~J. Smola, S.~Vishwanathan, and T.~Hofmann, ``Kernel methods for missing
  variables,'' in \emph{Proceedings of International Workshop on Artificial
  Intelligence and Statistics}.\hskip 1em plus 0.5em minus 0.4em\relax
  Citeseer, 2005, pp. 325--332.

\bibitem{lipp2014variations}
T.~Lipp and S.~Boyd, ``Variations and extensions of the convex-concave
  procedure,'' 2014.

\bibitem{Luo_MultlGroup_multicast}
E.~Karipidis, N.~Sidiropoulos, and Z.-Q. Luo, ``Quality of service and max-min
  fair transmit beamforming to multiple cochannel multicast groups,''
  \emph{{IEEE} Trans. Signal Process.}, vol.~56, no.~3, pp. 1268--1279, Mar.
  2008.

\bibitem{molisch2014caching}
A.~F. Molisch, G.~Caire, D.~Ott, J.~R. Foerster, D.~Bethanabhotla, and M.~Ji,
  ``Caching eliminates the wireless bottleneck in video aware wireless
  networks,'' \emph{Advances in Electrical Engineering}, vol. 2014, 2014.

\bibitem{Mixed-timescale}
A.~Liu and V.~Lau, ``Mixed-timescale precoding and cache control in cached
  {MIMO} interference network,'' \emph{{IEEE} Trans. Signal Process.}, vol.~61,
  no.~24, pp. 6320--6332, Dec. 2013.

\bibitem{peak-power-14}
D.~Christopoulos, S.~Chatzinotas, and B.~Ottersten, ``Weighted fair multicast
  multigroup beamforming under per-antenna power constraints,'' \emph{{IEEE}
  Trans. Signal Process.}, vol.~62, no.~19, pp. 5132--5142, Oct. 2014.

\bibitem{ZhengZheng_mulitcell_multicast}
Z.~Xiang, M.~Tao, and X.~Wang, ``Coordinated multicast beamforming in multicell
  networks,'' \emph{{IEEE} Trans. Wireless Commun.}, vol.~12, no.~1, pp.
  12--21, Jan. 2013.

\bibitem{lanckriet2009convergence}
G.~R. Lanckriet and B.~K. Sriperumbudur, ``On the convergence of the
  concave-convex procedure,'' in \emph{Advances in neural information
  processing systems}, 2009, pp. 1759--1767.

\bibitem{multicast-sla-2014}
L.-N. Tran, M.~F. Hanif, and M.~Juntti, ``A conic quadratic programming
  approach to physical layer multicasting for large-scale antenna arrays,''
  \emph{{IEEE} Signal Process. Lett.}, vol.~21, no.~1, pp. 114--117, 2014.

\bibitem{rank-two-cccp}
A.~Schad, K.~Law, and M.~Pesavento, ``Rank-two beamforming and power allocation
  in multicasting relay networks,'' \emph{{IEEE} Trans. Signal Process.},
  vol.~63, no.~13, pp. 3435--3447, July 2015.

\bibitem{rinaldi2010concave}
F.~Rinaldi, F.~Schoen, and M.~Sciandrone, ``Concave programming for minimizing
  the zero-norm over polyhedral sets,'' \emph{Computational Optimization and
  Applications}, vol.~46, no.~3, pp. 467--486, 2010.

\bibitem{le2015dc}
H.~A. Le~Thi, T.~P. Dinh, H.~M. Le, and X.~T. Vo, ``Dc approximation approaches
  for sparse optimization,'' \emph{European Journal of Operational Research},
  vol. 244, no.~1, pp. 26--46, 2015.

\bibitem{cvx}
M.~Grant and S.~Boyd, ``{CVX}: Matlab software for disciplined convex
  programming, version 2.1,'' \url{http://cvxr.com/cvx}, Mar. 2014.

\end{thebibliography}

\end{document}